\newcommand{\rep}{\mbox{rep}}
\newcommand{\cl}{\mbox{cl}}
\newcommand{\dist}{\mbox{dist}}
\newcommand{\etal}{\textit{et al.}}
\newtheorem{theorem}{Theorem}[section]
\newtheorem{lemma}[theorem]{Lemma}
\newtheorem{definition}[theorem]{Definition}
\newtheorem{problem}[theorem]{Problem}
\newenvironment{proof}{{\bf Proof:} \rm}{\hfill $\square$ \medskip\\}
\title{Spanners of Complete $k$-Partite Geometric Graphs} 
\author{Prosenjit Bose\thanks{School of Computer Science,
    Carleton University, Ottawa, Ontario, Canada K1S 5B6.
    Research partially supported by NSERC, MRI, CFI, and 
    MITACS.}
\and Paz Carmi\footnotemark[1] 
\and Mathieu Couture\footnotemark[1]
\and Anil Maheshwari\footnotemark[1] 
\and Pat Morin\footnotemark[1] 
\and Michiel Smid\footnotemark[1]
}
\begin{document}

\maketitle

\begin{abstract}
We address the following problem: Given a complete $k$-partite geometric 
graph $K$ whose vertex set is a set of $n$ points in $\mathbb{R}^d$,
compute a spanner of $K$ that has a ``small'' stretch factor and 
``few'' edges. We present two algorithms for this problem. The first 
algorithm computes a $(5+\epsilon)$-spanner of $K$ with $O(n)$ edges 
in $O(n \log n)$ time.  The second algorithm computes a 
$(3+\epsilon)$-spanner of $K$ with $O(n \log n)$ edges in $O(n \log n)$ 
time. The latter result is optimal: We show that for any 
$2 \leq k \leq n - \Theta(\sqrt{n \log n} )$, spanners with 
$O(n \log n)$ edges and stretch factor less than $3$ do not exist 
for all complete $k$-partite geometric graphs. 
\end{abstract}

%%%%%%%%%%%%%%%%%%%%%%%%%%%%%%%%%%%%%%%%%%%%%%%%%%%%%%%%%%%%%%%%%%%%%

\section{Introduction}
Let $S$ be a set of $n$ points in $\mathbb{R}^d$. 
A \emph{geometric graph} with vertex set $S$ is an undirected graph $H$  
whose edges are line segments $\overline{pq}$ that are weighted by the 
Euclidean distance $|pq|$ between $p$ and $q$. For any two points $p$ and 
$q$ in $S$, we denote by $\delta_H(p,q)$ the length of a shortest path 
in $H$ between $p$ and $q$. For a real number $t \geq 1$, a subgraph 
$G$ of $H$ is said to be a $t$-\emph{spanner} of $H$, if 
$\delta_G(p,q) \leq t \cdot \delta_H(p,q)$ for all points $p$ and 
$q$ in $S$. The smallest $t$ for which this property holds is called the 
\emph{stretch factor} of $G$. Thus, a subgraph $G$ of $H$ with stretch 
factor $t$ approximates the $n \choose 2$ pairwise shortest-path lengths 
in $H$ within a factor of $t$. If $H$ is the complete geometric graph 
with vertex set $S$, then $G$ is also called a $t$-spanner of the point 
set $S$. 

Most of the work on constructing spanners has been done for the case 
when $H$ is the complete graph. It is well known that for any set $S$ 
of $n$ points in $\mathbb{R}^d$ and for any real constant $\epsilon > 0$, 
there exists a $(1+\epsilon)$-spanner of $S$ containing $O(n)$ edges. 
Moreover, such spanners can be computed in $O(n \log n)$ time; see 
Salowe~\cite{s-cmsg-91} and Vaidya~\cite{v-sgagc-91}. For a detailed 
overview of results on spanners for point sets, see the book by
Narasimhan and Smid~\cite{smid07}.

For spanners of arbitrary geometric graphs, much less is known. 
Alth{\"o}fer \emph{et al.}~\cite{addjs-sswg-93} have shown that for any
$t>1$, every weighted graph $H$ with $n$ vertices contains a subgraph
with $O(n^{1+2/(t-1)})$ edges, which is a $t$-spanner of $H$. 
Observe that this result holds for any weighted graph; in particular, 
it is valid for any geometric graph. For geometric graphs, a lower bound 
was given by Gudmundsson and Smid~\cite{gs-osogg-06}: They proved that 
for every real number $t$ with $1 < t < \frac{1}{4} \log n$, there exists 
a geometric graph $H$ with $n$ vertices, such that every $t$-spanner 
of $H$ contains $\Omega( n^{1 + 1/t} )$ edges.
Thus, if we are looking for spanners with $O(n)$ edges of arbitrary 
geometric graphs, then the best stretch factor we can obtain is 
$\Theta(\log n)$. 

In this paper, we consider the case when the input graph is a complete 
$k$-partite geometric graph. Let $S$ be a set of $n$ points in 
$\mathbb{R}^d$, and let $S$ be partitioned into subsets 
$C_1,C_2,\ldots,C_k$. Let $K_{C_1\ldots C_k}$ denote the 
\emph{complete $k$-partite graph on $S$}. This graph has $S$ as its 
vertex set and two points $p$ and $q$ are connected by an edge 
(of length $|pq|$) if and only if $p$ and $q$ are in different subsets 
of the partition. The problem we address is formally defined as 
follows:

\begin{problem} 
Let $k \geq 2$ be an integer, let $S$ be a set of $n$ points in 
$\mathbb{R}^d$, and let $S$ be partitioned into $k$ subsets 
$C_1,C_2,\ldots,C_k$. Compute a $t$-spanner of the complete $k$-partite 
graph $K_{C_1\ldots C_k}$ that has a ``small'' number of edges and whose 
stretch factor $t$ is ``small''. 
\end{problem}

The main contribution of this paper is to present an algorithm that 
computes such a $t$-spanner with $O(n)$ edges in $O(n \log n)$ time, 
where $t=5+\epsilon$ for any constant $\epsilon>0$. We also show that 
if one is willing to use $O(n \log n)$ edges, then our algorithm adapts 
easily to reach a stretch factor of $t=3+\epsilon$. Finally, we show 
that the latter result is optimal: For any $k$ with 
$2 \leq k \leq n - \Theta(\sqrt{n \log n} )$, spanners with 
$O(n \log n)$ edges and stretch factor less than $3$ do not exist 
for all complete $k$-partite geometric graphs. 

We remark that in a recent paper, 
Bose \etal~\cite{couture07chromaticReport} 
considered the problem of constructing spanners of point sets that have 
$O(n)$ edges and whose chromatic number is a most $k$. This problem 
is different from ours: Bose \etal\ compute a spanner of the 
complete graph and their algorithm can choose a ``good'' $k$-partition 
of the vertices. In our problem, the $k$-partition is given and we want 
to compute a spanner of the complete $k$-partite graph.  
 
Possible applications of our algorithm are in wireless networks
having the property that communicating nodes are partitioned into
sets such that two nodes can communicate if and only if they do not
belong to the same set. This would be the case, for example, when
Time Division Multiplexing (TDMA) is used. Since the wireless medium
prohibits simultaneous transmission and reception at one node, two
nodes communicating during the same time slots cannot communicate
with each other; see Raman and Chebrolu~\cite{raman05}.

The rest of this paper is organized as follows. 
In Section~\ref{section-wspd}, we recall properties of the
Well-Separated Pair Decomposition (WSPD) that we use in our
algorithm. In Section~\ref{section-first-algo}, we provide an 
algorithm that solves the problem of constructing a spanner of the 
complete $k$-partite graph. In Section~\ref{section-first-analysis}, we 
show that the spanner constructed by this algorithm has $O(n)$ edges 
and that its stretch factor is bounded from above by a constant that 
depends only on the dimension $d$. In 
Section~\ref{section-improved-algo}, we show how a simple modification 
to our algorithm improves the stretch factor to $5+\epsilon$ while still 
having $O(n)$ edges. In Section~\ref{section-tightBounds}, we show how 
to achieve a stretch factor of $3+\epsilon$ using $O(n\log n)$ edges. 
We also prove that the latter result is optimal.  
We conclude in Section~\ref{section-bi-span-conclusion}.

\section{The Well-Separated Pair Decomposition}\label{section-wspd}
In this section, we recall crucial properties of the 
Well-Separated Pair Decomposition (WSPD) of 
Callahan and Kosaraju~\cite{callahan95} that we use for our 
construction. The reader who is familiar with the WSPD may go 
directly to Section~\ref{section-first-algo}.
Our presentation follows the one in Narasimhan and
Smid~\cite{smid07}. Intuitively, a WSPD is a partition of the
edges of a complete geometric graph such that all edges that are
grouped together are \emph{approximately} equal. To give a formal
definition of the WSPD, we first need to define what it means for two
sets to be well-separated.

\begin{definition} 
Let $S$ be a set of points in $\mathbb{R}^d$. The \emph{bounding box}
$\beta(S)$ of $S$ is the smallest axes-parallel hyperrectangle that 
contains $S$.
\end{definition}

\begin{definition} 
Let $X$ and $Y$ be two sets of points in $\mathbb{R}^d$ and let 
$s>0$ be a real number. We say that $X$ and $Y$ are 
\emph{well-separated} with respect to $s$ if there exists two balls 
$B_1$ and $B_2$ such that 
\begin{enumerate}
\item $B_1$ and $B_2$ have the same radius, say $\rho$,
\item $B_1$ contains the bounding box of $X$,
\item $B_2$ contains the bounding box of $Y$, and
\item the distance 
      $\min \{ |xy| : x \in B_1 \cap \mathbb{R}^d , 
                      y \in B_2 \cap \mathbb{R}^d 
              \}$ 
      between $B_1$ and $B_2$ is at least $s \rho$.
\end{enumerate}
\end{definition}

\begin{definition} 
Let $S$ be a set of points in $\mathbb{R}^d$ and let $s>0$ be a real 
number. A \emph{well-separated pair decomposition (WSPD) of $S$ with 
separation constant $s$} is a set of unordered pairs of subsets of 
$S$ that are well-separated with respect to $s$, such that for any 
two distinct points $p,q\in S$ there is a unique pair $\{X,Y\}$ in the 
WSPD such that $p\in X$ and $q\in Y$.
\end{definition}

\begin{lemma}[Lemma 9.1.2 in \cite{smid07}]  \label{lemma-912} 
Let $s>0$ be a real number and let $X$ and $Y$ be two point sets 
that are well-separated with respect to $s$. 
\begin{enumerate}
\item If $p,p',p'' \in X$ and $q \in Y$, then 
      $|p'p''|\leq (2/s)|pq|$. 
\item If $p,p' \in X$ and $q,q' \in Y$, then 
      $|p'q'|\leq (1+4/s)|pq|$.
\end{enumerate}
\end{lemma}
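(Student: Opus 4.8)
The plan is to prove both parts by direct geometric estimates using the two enclosing balls $B_1$ and $B_2$ of common radius $\rho$ whose existence is guaranteed by well-separatedness. Let me set up notation first: since $p,p',p''$ lie in $X$ and $X$'s bounding box is contained in $B_1$, any two of these points are at distance at most $2\rho$ (the diameter of $B_1$); similarly any two points of $Y$ are at distance at most $2\rho$. The separation condition gives $|pq| \geq s\rho$ for every $p \in X$, $q \in Y$, since $X \subseteq B_1$ and $Y \subseteq B_2$ and the minimum distance between the balls is at least $s\rho$.

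For part 1, I would simply chain these two facts: $|p'p''| \leq 2\rho$ because $p',p''$ both lie in $B_1$, and $\rho \leq |pq|/s$ because $p \in B_1$, $q \in B_2$ and the balls are at distance at least $s\rho$. Combining gives $|p'p''| \leq (2/s)|pq|$. Note this only needs one point $p$ in $X$ and one point $q$ in $Y$ to pin down the scale $\rho$, which is why the statement is phrased with three points in $X$ but only one in $Y$.

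For part 2, I would use the triangle inequality to write $|p'q'| \leq |p'p| + |pq| + |qq'|$. By part 1 (applied with the roles as needed), $|p'p| \leq (2/s)|pq|$ and $|qq'| \leq (2/s)|pq|$ — here $|p'p|$ is a distance between two points of $X$ and $|qq'|$ a distance between two points of $Y$, each bounded by $2\rho \leq (2/s)|pq|$. Hence $|p'q'| \leq (1 + 2/s + 2/s)|pq| = (1+4/s)|pq|$.

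I do not anticipate a serious obstacle here; the only thing to be careful about is making sure the balls' radius $\rho$ is correctly tied to the inter-set distances (using that the bounding boxes, hence the point sets, sit inside the balls) and that the minimum-distance condition in Definition is applied to actual points of $X$ and $Y$, which are certainly in $B_1 \cap \mathbb{R}^d$ and $B_2 \cap \mathbb{R}^d$ respectively. The argument is a routine application of the triangle inequality together with the two defining inequalities $\mathrm{diam}(B_i) = 2\rho$ and $\dist(B_1,B_2) \geq s\rho$.
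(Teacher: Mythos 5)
Your argument is correct and is essentially the standard proof of Lemma~\ref{lemma-912} as given in the cited source: the within-set distances are bounded by the common diameter $2\rho$ of the enclosing balls, the separation condition gives $s\rho \leq |pq|$, and part~2 follows by the triangle inequality $|p'q'| \leq |p'p| + |pq| + |qq'| \leq (1+4/s)|pq|$. No gaps; the paper itself only quotes this lemma, and your reconstruction matches the intended argument.
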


The first part of this lemma states that distances within one set are 
very small compared to distances between pairs of points having one 
endpoint in each set. The second part states that all pairs of points 
having one endpoint in each set have approximately the same distance.

Callahan and Kosaraju~\cite{ck-fasgg-93} have shown how to construct a 
$t$-spanner of $S$ from a WSPD: All one has to do is pick from each 
pair $\{X,Y\}$ an arbitrary edge $(p,q)$ with $p\in X$ and $q\in Y$. 
Using induction on the rank of the length of the edges in the complete 
graph $K_S$, it can be shown that, when $s>4$, this process leads to 
a $((s+4)/(s-4))$-spanner. Thus, by choosing $s$ to be a sufficiently 
large constant, the stretch factor can be made arbitrarily close to $1$. 

In order to compute a spanner of $S$ that has a linear number of edges, 
one needs a WSPD that has a linear number of pairs. Callahan and
Kosaraju~\cite{callahan95} showed that a WSPD with a linear number
of pairs always exists and can be computed in time $O(n\log n)$.
Their algorithm uses a split-tree.

\begin{definition} 
Let $S$ be a non-empty set of points in $\mathbb{R}^d$. The 
\emph{split-tree} of $S$ is defined as follows: if $S$ contains only 
one point, then the split-tree is a single node that stores that point.
Otherwise, the split-tree has a root that stores the bounding box 
$\beta(S)$ of $S$, as well as an arbitrary point of $S$ called the 
\emph{representative} of $S$ and denoted by $\rep(S)$. Split 
$\beta(S)$ into two hyperrectangles by cutting its longest interval 
into two equal parts, and let $S_1$ and $S_2$ be the subsets of $S$ 
contained in the two hyperrectangles. The root of the split-tree 
of $S$ has two sub-trees, which are recursively defined split-trees 
of $S_1$ and $S_2$.
\end{definition}

The precise way Callahan and Kosaraju used the split-tree to compute a
WSPD with a linear number of pairs is of no importance to us. 
The only important aspect we need to retain is that each pair is 
uniquely determined by a pair of nodes in the tree. More precisely, for 
each pair $\{X,Y\}$ in the WSPD that is output by their algorithm, 
there are unique internal nodes $u$ and $v$ in the split-tree such that 
the sets $S_u$ and $S_v$ of points stored at the leaves of the subtrees 
rooted at $u$ and $v$ are precisely $X$ and $Y$. Since there is such 
a unique correspondence, we will denote pairs in the WSPD by 
$\{S_u,S_v\}$, meaning that $u$ and $v$ are the nodes corresponding to 
the sets $X = S_u$ and $Y = S_v$. Also, although the WSPD of a point 
set is not unique, when we talk about \emph{the} WSPD, we mean the 
WSPD that is computed by the algorithm of Callahan and Kosaraju.

Before we present our algorithm, we give the statement of the
following lemmas that we use to analyze our algorithm in
Section~\ref{section-first-analysis}.
If $R$ is an axes-parallel hyperrectangle in $\mathbb{R}^d$, then 
we use $L_{\max}(R)$ to denote the length of a longest side of $R$. 
 
\begin{lemma}[Lemma~9.5.3 in \cite{smid07}]  \label{lemma-953}
Let $u$ be a node in the split-tree and let $u'$ be a node in the 
subtree of $u$ such that the path between them contains at least 
$d$ edges. Then
\[ L_{\max}(\beta(S_{u'}))\leq \frac{1}{2}\cdot L_{\max}(\beta(S_u)).
\] 
\end{lemma}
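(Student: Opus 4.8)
The plan is to follow the path from $u$ down to $u'$ and keep track of how the side lengths of the bounding boxes change, using two elementary facts. First, if $u'$ lies in the subtree of $u$ then $S_{u'}\subseteq S_u$, so $\beta(S_{u'})\subseteq\beta(S_u)$ and every side length of $\beta(S_{u'})$ is at most the corresponding side length of $\beta(S_u)$; in other words, bounding boxes only shrink as we descend. Second, a single step of the split-tree construction cuts exactly one coordinate interval of the current bounding box in half, and that interval is a \emph{longest} one. The whole proof is an accounting argument combining these two facts with a pigeonhole observation about which coordinates get cut.

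Concretely, write the path as $u=u_0,u_1,\ldots,u_m=u'$ with $m\geq d$, set $R_j=\beta(S_{u_j})$, and let $\ell=L_{\max}(R_0)=L_{\max}(\beta(S_u))$. For each $j<m$ the node $u_j$ has the child $u_{j+1}$ on the path, so $u_0,\ldots,u_{d-1}$ are all internal nodes and each performs a split; let $a_j\in\{1,\ldots,d\}$ be the coordinate cut at step $j$. Since $R_{j+1}$ is contained in the half of $R_j$ created by that cut, the length of coordinate $a_j$ in $R_{j+1}$ is at most half its length in $R_j$, which is at most its length in $R_0$, which is at most $\ell$; hence coordinate $a_j$ has length at most $\ell/2$ in $R_{j+1}$, and by the containment $R_m\subseteq\cdots\subseteq R_{j+1}$ it has length at most $\ell/2$ in $R_m$ as well.

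Now I would run a pigeonhole argument on the $d$ indices $a_0,\ldots,a_{d-1}$. If they are pairwise distinct they exhaust $\{1,\ldots,d\}$, so by the previous paragraph every coordinate of $R_m$ has length at most $\ell/2$ and we are done. Otherwise some coordinate $a$ is cut at two steps $j<j'$ with $j'\leq d-1$; then coordinate $a$ already has length at most $\ell/2$ in $R_{j'}$, and because the split at step $j'$ picks a longest side of $R_{j'}$, \emph{every} side of $R_{j'}$ has length at most $\ell/2$. Since $R_m\subseteq R_{j'}$, again $L_{\max}(R_m)\leq\ell/2$. In both cases $L_{\max}(\beta(S_{u'}))\leq\frac12 L_{\max}(\beta(S_u))$, as claimed.

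I do not expect a genuine obstacle here; the proof is essentially routine. The only points needing a little care are the bookkeeping in the second paragraph — in particular, that the coordinate cut at step $j$ is the longest side of $R_j$, which need not equal $\ell$ but is certainly at most $\ell$ precisely because bounding boxes only shrink down the tree — and the observation that $d$ split steps are exactly what is needed to force, by pigeonhole, either a coordinate that is cut twice or a complete sweep through all $d$ coordinates.
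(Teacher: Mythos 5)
Your proof is correct: the nesting of bounding boxes along the path, the fact that each split cuts a longest side of the current node's bounding box in half, and the pigeonhole argument on the $d$ cut coordinates (either all $d$ coordinates get cut, or some coordinate that is already at most $\ell/2$ is again the longest side at a later split) together yield $L_{\max}(\beta(S_{u'}))\leq\frac{1}{2}L_{\max}(\beta(S_u))$. The paper itself gives no proof of this statement --- it is imported as Lemma~9.5.3 from Narasimhan and Smid --- and your argument is essentially the standard one used there.
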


\begin{lemma}[Lemma~11.3.1 in \cite{smid07}]  \label{lemma-1131}
Let $\{S_u,S_v\}$ be a pair in the WSPD, let $\ell$ be the distance 
between the centers of $\beta(S_u)$ and $\beta(S_v)$, and let $\pi(u)$ 
be the parent of $u$ in the split-tree. Then
\[ L_{\max}(\beta(S_{\pi(u)})) \geq \frac{2\ell}{\sqrt{d}(s+4)}.
\]
\end{lemma}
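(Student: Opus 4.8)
The plan is to use the one structural fact about the Callahan--Kosaraju construction that matters here: a pair is output as soon as it becomes well-separated, so the pair from which $\{S_u,S_v\}$ was generated during the recursion is \emph{not} well-separated with respect to $s$. The whole lemma is then a matter of turning this negative statement into a lower bound on $L_{\max}(\beta(S_{\pi(u)}))$; everything else is elementary estimates on bounding boxes and their enclosing balls.

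First I would recall that $\{S_u,S_v\}$ is produced by the recursive call on the pair $\{S_{\pi(u)},S_v\}$, the node $\pi(u)$ being the one whose bounding box is refined in that call. Two consequences follow: since the recursion always refines the node with the larger bounding box, $L_{\max}(\beta(S_v))\le L_{\max}(\beta(S_{\pi(u)}))$; and $\{S_{\pi(u)},S_v\}$ is not well-separated with respect to $s$, as otherwise it would have been output rather than refined. The case where $\pi(u)$ is the root is immediate: then $S_u$ and $S_v$ both lie in $\beta(S)$, so $\ell\le\sqrt{d}\,L_{\max}(\beta(S))$, which already implies the claim.

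Next I would set $\rho=\tfrac12\sqrt{d}\,L_{\max}(\beta(S_{\pi(u)}))$ and use the fact that an axes-parallel box whose longest side has length $L$ is contained in the ball of radius $\tfrac12\sqrt{d}\,L$ centered at the box's center. Hence the ball $B_1$ of radius $\rho$ about the center of $\beta(S_{\pi(u)})$ contains $\beta(S_{\pi(u)})$, and, by the inequality above, the ball $B_2$ of radius $\rho$ about the center of $\beta(S_v)$ contains $\beta(S_v)$. These two equal-radius balls satisfy conditions~1--3 in the definition of well-separatedness, so since $\{S_{\pi(u)},S_v\}$ is not well-separated, condition~4 must fail for them: the distance between $B_1$ and $B_2$ is less than $s\rho$. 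As the balls have equal radius $\rho$, that distance equals the distance between their centers minus $2\rho$, so the distance between the centers of $\beta(S_{\pi(u)})$ and $\beta(S_v)$ is less than $(s+2)\rho$.

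Finally, since $S_u\subseteq S_{\pi(u)}$ we have $\beta(S_u)\subseteq\beta(S_{\pi(u)})$, so the center of $\beta(S_u)$ lies inside $\beta(S_{\pi(u)})$ and is within distance $\rho$ of its center; the triangle inequality then gives $\ell\le(s+2)\rho+\rho\le(s+4)\rho$, and substituting $\rho=\tfrac12\sqrt{d}\,L_{\max}(\beta(S_{\pi(u)}))$ and rearranging yields $L_{\max}(\beta(S_{\pi(u)}))\ge \tfrac{2\ell}{\sqrt{d}\,(s+4)}$. The step I expect to require the most care is the first one --- identifying precisely which recursive call produced $\{S_u,S_v\}$ and confirming it is the bounding box of $\pi(u)$ (and not of $\pi(v)$) that is refined there, plus the separate root case; once that bookkeeping is settled, the geometric estimates are routine and in fact leave a little slack, giving $(s+3)$ in place of $(s+4)$.
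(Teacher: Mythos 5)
Your overall strategy is the right one --- and it is essentially the standard proof of this lemma (the paper itself does not prove it; it quotes Lemma~11.3.1 of Narasimhan and Smid): negate well-separatedness at the moment the parent was refined, enclose the relevant bounding boxes in two balls of common radius $\rho=\tfrac12\sqrt{d}\,L_{\max}(\beta(S_{\pi(u)}))$, and convert ``distance between the balls $< s\rho$'' into a bound on the distance between box centers. The elementary estimates in your second and third paragraphs are fine.

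The gap is exactly the step you flagged and then asserted without justification: it is \emph{not} true in general that the call producing the output call on $(u,v)$ is the call on $(\pi(u),v)$ with $\pi(u)$ refined. Since the pair is unordered and the lemma is applied to a designated node $u$ (in the paper, to $u=w_i$), the immediately preceding call may just as well have been on $(u,\pi(v))$ with $\pi(v)$ refined; e.g.\ the recursion may split $a$ into $a_1,a_2$, then split $b$ into $b_1,b_2$, and only then output $\{a_1,b_1\}$ --- for $u=a_1$ your claimed structure fails. Your argument then bounds $L_{\max}(\beta(S_{\pi(v)}))$, not $L_{\max}(\beta(S_{\pi(u)}))$, and there is no inequality letting you transfer the bound to the other parent. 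The repair is to walk back in the call chain to the call in which $\pi(u)$ was actually split; at that call the partner is some node $y$ with $S_v\subseteq S_y$ (an ancestor of $v$, possibly $v$ itself), the pair $(\pi(u),y)$ is not well-separated, and $L_{\max}(\beta(S_y))\leq L_{\max}(\beta(S_{\pi(u)}))$, so your ball argument gives that the centers of $\beta(S_{\pi(u)})$ and $\beta(S_y)$ are within $(s+2)\rho$. You must then pay $\rho$ \emph{twice}: once to move from the center of $\beta(S_{\pi(u)})$ to the center of $\beta(S_u)$, and once to move from the center of $\beta(S_y)$ to the center of $\beta(S_v)$ (which lies in $\beta(S_y)\subseteq B_2$). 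This yields $\ell\leq(s+4)\rho$ --- which is precisely why the stated constant is $s+4$; the ``slack'' giving you $s+3$ is a symptom of the missing case, not a genuine improvement. Finally, your base case should not be ``$\pi(u)$ is the root'' but ``$u$ was never split in the chain'', i.e.\ $u$ and an ancestor of $v$ are the two children of $\pi(u)$ from the initial sibling call; then $\beta(S_u)$ and $\beta(S_v)$ both lie in $\beta(S_{\pi(u)})$ and $\ell\leq\sqrt{d}\,L_{\max}(\beta(S_{\pi(u)}))$ suffices, exactly as in your root argument.
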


%%%%%%%%%%%%%%%%%%%%%%%%%%%%%%%%%%%%
%
%%%%%%%%%%%%%%%%%%%%%%%%%%%%%%%%%%%%

\section{A First Algorithm}\label{section-first-algo}

We now show how the WSPD can be used to address the problem of
computing a spanner of a complete $k$-partite graph. In this section,  
we introduce an algorithm that outputs a graph with constant
stretch factor and $O(n)$ edges. The analysis of this 
algorithm is presented in Section~\ref{section-first-analysis}.
In Section~\ref{section-improved-algo}, we show how this algorithm can
be improved to achieve a stretch factor of $5+\epsilon$.

The input set $S \subseteq \mathbb{R}^d$ is the disjoint union of $k$ 
sets $C_1,C_2,\ldots,C_k$. We say that the elements of $C_c$ have 
``color'' $c$. The graph $K = K_{C_1 \ldots C_k}$ is the complete 
$k$-partite geometric graph. 

\begin{definition}   \label{defMWSPD}   
Let $T$ be the split-tree of $S$ that is used to compute the WSPD of $S$.
\begin{enumerate}
\item For any node $u$ in $T$, we denote by $S_u$ the set of all points 
      in the subtree rooted at $u$.
\item We define MWSPD to be the subset of the WSPD obtained by removing 
      all pairs $\{S_u,S_v\}$ for which all points of $S_u \cup S_v$ 
      have the same color. 
\item A node $u$ in $T$ is called \emph{multichromatic} if there exist 
      points $p$ and $q$ in $S_u$ and a node $v$ in $T$, such that 
      $p$ and $q$ have different colors and $\{S_u,S_v\}$ is in the 
      MWSPD.
\item A node $u$ in $T$ is called a \emph{$c$-node} if all points of 
      $S_u$ have color $c$ and there exists a node $v$ in $T$ such that 
      $\{S_u,S_v\}$ is in the MWSPD. 
\item A $c$-node $u$ in $T$ is called a \emph{$c$-root} if it does not 
      have a proper ancestor that is a $c$-node in $T$.
\item A $c$-node $u$ in $T$ is called a \emph{$c$-leaf} if it does not 
      have another $c$-node in its subtree.
\item A $c$-node $u'$ in $T$ is called a \emph{$c$-child} of a $c$-node 
      $u$ in $T$ if $u'$ is in the subtree rooted at $u$ and there is 
      no $c$-node on the path strictly between $u$ and $u'$.
\item For each color $c$ and for each $c$-node $u$ in $T$, 
      $\rep(S_u)$ denotes a fixed arbitrary point in $S_u$.
\item For each multichromatic node $u$ in $T$, $\rep(S_u)$ and 
      $\rep'(S_u)$ denote two fixed arbitrary points in $S_u$ that 
      have different colors.
\item The \emph{distance} between two sets $S_v$ and $S_w$, denoted by 
      $\dist(S_v,S_w)$, is defined to be the distance between the 
      centers of their bounding boxes.
\item Let $u$ be a $c$-node in $T$. Consider all pairs $\{ S_v,S_w \}$ 
      in the MWSPD, where $v$ is a $c$-node on the path in $T$ from 
      $u$ to the root (this path includes $u$). Let $\{ S_v,S_w \}$ be 
      such a pair for which $\dist(S_v,S_w)$ is minimum. We define 
      $\cl(S_u)$ to be the set $S_w$. 
\end{enumerate}
\end{definition}

\begin{figure}
\centering
%\begin{center} 
%\begin{subfigure} 
\includegraphics{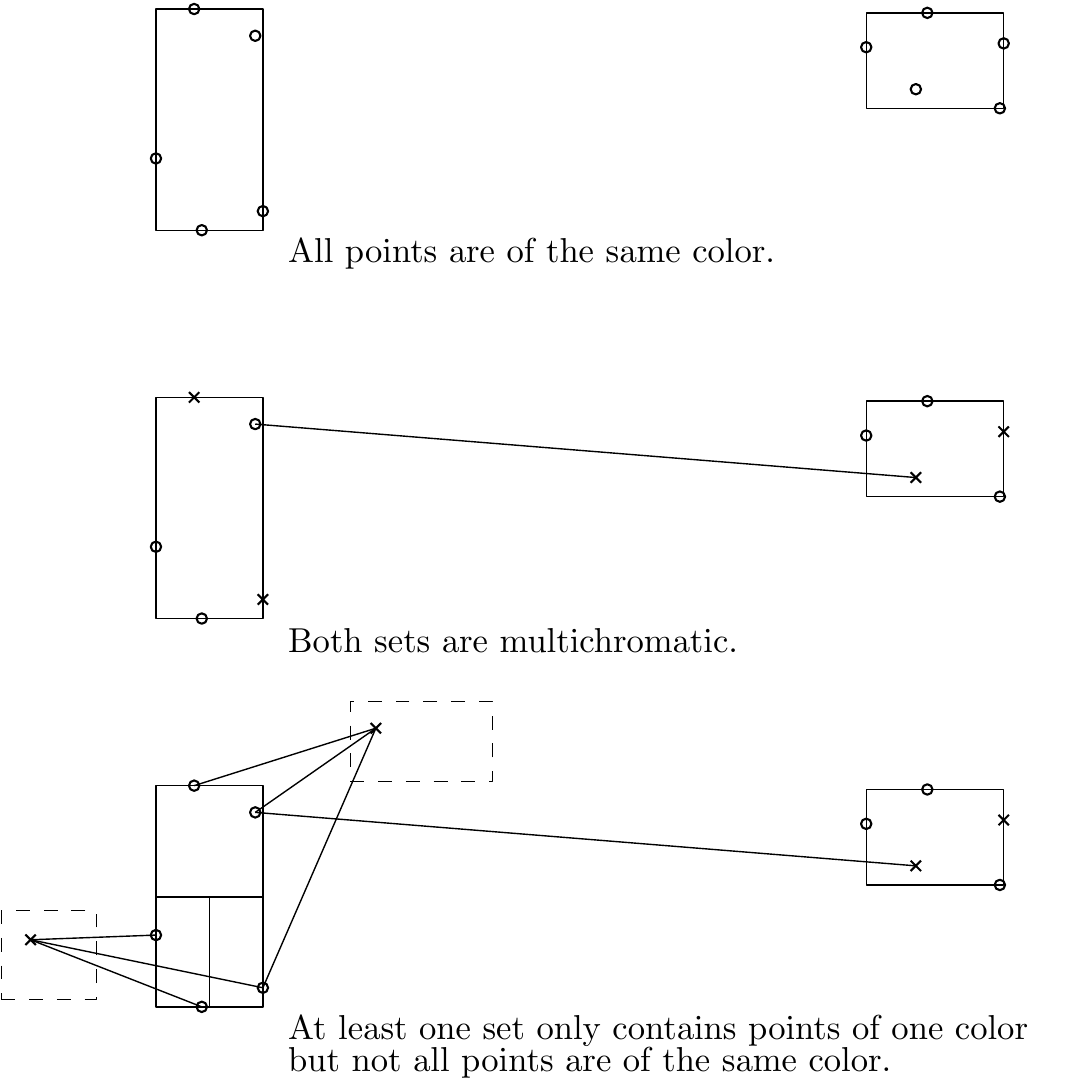}
\caption{The three cases of Algorithm~\ref{alg-biSpan}.}
\label{fig-bi-cases}
%\end{subfigure}
\end{figure} 

\begin{figure} 
\centering
%\begin{subfigure}
\includegraphics{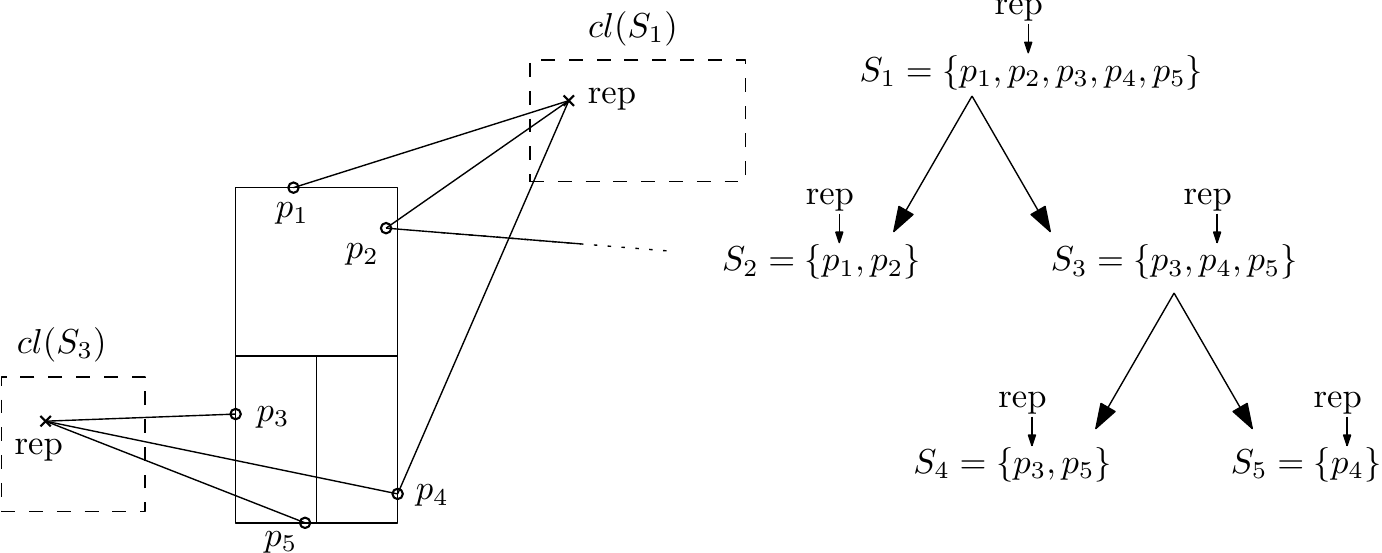}
\caption{Handling a $c$-node.}
\label{fig-go-to-rep}
%\end{subfigure}
%\end{center}
\end{figure}

Algorithm~\ref{alg-biSpan} computes a spanner of a complete
$k$-partite geometric graph $K = K_{C_1 \ldots C_k}$. The intuition 
behind this algorithm is the following. First, the algorithm computes 
the WSPD. Then, it considers each pair $\{S_u,S_v\}$ of the WSPD, and 
decides whether or not to add an edge between $S_u$ and $S_v$. The 
outcome of this decision is based on the following three cases, which 
are illustrated in Figures~\ref{fig-bi-cases} and~\ref{fig-go-to-rep}.

\vspace{0.5em} 

\noindent 
{\bf Case 1:} 
All points of $S_u \cup S_v$ are of the same color. In this case, there 
is no edge of $K$ to approximate, so the algorithm ignores this 
pair. 

\vspace{0.5em} 

\noindent 
{\bf Case 2:} 
Both $S_u$ and $S_v$ are multichromatic. In this case, the algorithm 
adds one edge between $S_u$ and $S_v$ to the spanner; see lines 
\ref{line28}--\ref{line29}. Observe that the two vertices of this edge 
do not have the same color. This edge will allow us to approximate 
each edge $(p,q)$ of $K$, where $p \in S_u$, $q \in S_v$, and $p$ 
and $q$ have different colors. 

\vspace{0.5em} 

\noindent 
{\bf Case 3:} 
All points in $S_u$ are of the same color $c$. In this case, an edge 
is added between $\rep(S_u)$ and one of the two representatives of 
$S_v$ whose color is not $c$; see lines \ref{line17}--\ref{line18}.
In order to approximate each edge of $K$ having one vertex (of color 
$c$) in $S_u$ and the other vertex (of a different color) in $S_v$, 
more edges have to be added. This is done in such a way that our final 
graph contains a ``short'' path between every point $p$ of $S_u$ and the
representative $\rep(S_u)$ of $S_u$. Observe that this path must contain 
points whose color is not equal to $c$; thus, these points are not 
in $S_u$. One way to achieve this is to add an edge between each point 
of $S_u$ and one of the two representatives of $\cl(S_u)$ whose color 
is not $c$; we call this construction a \emph{star}.
However, since the subtree rooted at $u$ may contain other $c$-nodes,
many edges may be added for each point in $S_u$, which could possibly 
lead to a quadratic number of edges in the final graph. To guarantee 
that the algorithm does not add too many edges, it introduces a star 
only if $u$ is a $c$-leaf; see lines \ref{line8}--\ref{line11}.  
If $u$ is a $c$-node, the algorithm only adds one edge between  
$\rep(S_u)$ and a representatives of $\cl(S_u)$ whose color is 
not $c$; see lines \ref{line14}--\ref{line15}. Then, the algorithm links 
each $c$-node $u''$ that is not a $c$-root to its $c$-parent $u'$. This 
is done through an edge between $\rep(S_{u''})$ and a representative 
of $\cl(S_{u'})$ whose color is not $c$; see lines 
\ref{line21}--\ref{line22}. This third case is illustrated in 
Figure~\ref{fig-go-to-rep}.

\begin{algorithm} 
\SetLine
\KwIn{A set $S$ of points in $\mathbb{R}^d$, which is partitioned 
into $k$ subsets $C_1,\ldots,C_k$.}  
\KwOut{A spanner $G=(S,E)$ of the complete $k$-partite graph 
$K_{C_1\ldots C_k}$.} 
\BlankLine  
compute the split-tree $T$ of $S$\; 
using $T$, compute the WSPD with respect to a separation 
       constant $s>0$\;  
using the WSPD, compute the MWSPD\nllabel{line3}\; 
$E \leftarrow \emptyset$\; 
\For{\emph{each color $c$ in $\{ 1,2,\ldots, k \}$}}{ 
     \For{\emph{each $c$-root $u$ in $T$}}{ 
          \For{\emph{each $c$-leaf $u'$ in the subtree of $u$}}{ 
               \For{\emph{each $p \in S_{u'}$}\nllabel{line8}}{  
                    \lIf{\emph{$\rep(\cl(S_{u'}))$ does not have 
                          color $c$}}
                         {add $(p,\rep(\cl(S_{u'})))$ 
                           to $E$}\nllabel{line9}\; 
                    \lElse{add $(p,\rep'(\cl(S_{u'})))$ 
                           to $E$}\nllabel{line10}\; 
                    }\nllabel{line11} 
          }  
          \For{\emph{each $c$-node $u'$ that is in the subtree of $u$ 
               (including $u$)}}{  
               \lIf{\emph{$\rep(\cl(S_{u'}))$ does not have color $c$}}
                   {add $(\rep(S_{u'}),\rep(\cl(S_{u'})))$ 
                           to $E$}\nllabel{line14}\; 
                   \lElse{add $(\rep(S_{u'}),\rep'(\cl(S_{u'})))$ 
                           to $E$}\nllabel{line15}\; 
               \For{\emph{for each pair $\{S_{u'},S_{v'}\}$ in the 
                             MWSPD}}{
                    \lIf{\emph{$\rep(S_{v'})$ does not have color $c$}}
                        {add $(\rep(S_{u'}),\rep(S_{v'}))$ 
                                to $E$}\nllabel{line17}\;
                    \lElse{add $(\rep(S_{u'}),\rep'(S_{v'}))$ 
                                to $E$}\nllabel{line18}\; 
                    } 
               \For{\emph{each $c$-child $u''$ of 
                    $u'$}\nllabel{line20}}{ 
                    \lIf{\emph{$\rep(\cl(S_{u'}))$ does not have 
                                color $c$}}
                        {add $(\rep(S_{u''}),\rep(\cl(S_{u'})))$ 
                                      to $E$}\nllabel{line21}\; 
                    \lElse{add $(\rep(S_{u''}),\rep'(\cl(S_{u'})))$ 
                                      to $E$}\nllabel{line22}\; 
                    }\nllabel{line23} 
           } 
     }  
} 
\For{\emph{each $\{S_{u},S_{v}\}$ in the MWSPD for which both $u$ and 
    $v$ are multichromatic}}{
     \lIf{\emph{$\rep(S_u)$ and $\rep(S_v)$ have distinct colors}}
          {add $(\rep(S_u),\rep(S_v))$ to $E$}\nllabel{line28}\; 
     \lElse{add $(\rep(S_u),\rep'(S_v))$ 
                    to $E$}\nllabel{line29}\; 
    } 
return the graph $G=(S,E)$\nllabel{line31} 
\caption{Computing a sparse subgraph of $K_{C_1 \ldots C_k}$ whose 
    stretch factor is bounded by a constant.\label{alg-biSpan}} 
\end{algorithm}

\section{Analysis of Algorithm~\ref{alg-biSpan}}
\label{section-first-analysis}

\begin{lemma}
\label{thm-numberOfEdges}
The graph $G$ computed by Algorithm~\ref{alg-biSpan} has $O(|S|)$ edges.
\end{lemma}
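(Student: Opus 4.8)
The plan is to bound, block by block, the number of edges created in Algorithm~\ref{alg-biSpan}, using as the main lever the fact that the Callahan--Kosaraju WSPD, and therefore the MWSPD, has only $O(|S|)$ pairs. An immediate consequence I would record first is that the number of split-tree nodes $u$ occurring in some MWSPD pair is $O(|S|)$; in particular, summed over all colors $c$, the total number of $c$-nodes is $O(|S|)$ (each such node appears in an MWSPD pair by Definition~\ref{defMWSPD}), and likewise the number of multichromatic nodes is $O(|S|)$. It is also worth noting up front that each $c$-node lies in the subtree of a \emph{unique} $c$-root (walk up to the root: the topmost $c$-node on that path is the unique $c$-root above it), so the nested loops really do visit each $c$-node, $c$-leaf, and $c$-child exactly once.

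With that in hand, four of the five edge-creating blocks are routine charging. Lines~\ref{line28}--\ref{line29} add one edge per MWSPD pair whose two nodes are both multichromatic, hence $O(|S|)$ edges. Lines~\ref{line17}--\ref{line18} add one edge per MWSPD pair $\{S_{u'},S_{v'}\}$ each time one of its nodes is a $c$-node; since a given pair can be charged this way at most twice (once per endpoint), this is again $O(|S|)$. Lines~\ref{line14}--\ref{line15} add exactly one edge per $c$-node, for $O(|S|)$ edges over all colors. Lines~\ref{line21}--\ref{line22} add one edge per $c$-child, and since every $c$-node is the $c$-child of at most one $c$-node (its $c$-parent), the number of $c$-children is at most the number of $c$-nodes, hence $O(|S|)$.

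The one block that requires a genuine observation is the star construction in lines~\ref{line8}--\ref{line11}, which adds $|S_{u'}|$ edges for each $c$-leaf $u'$; a priori a single point could be swept up by many nested $c$-leaf subtrees. The key point is that, for a fixed color $c$, distinct $c$-leaves have pairwise disjoint subtrees: if one $c$-leaf were a proper ancestor of another, it would contain a $c$-node in its subtree, contradicting the definition of a $c$-leaf. Since moreover every point of $S_{u'}$ for a $c$-leaf $u'$ has color $c$, we get $\sum_{u'\ \text{a}\ c\text{-leaf}} |S_{u'}| \le |C_c|$, and summing over all colors yields at most $|S|$ star edges in total.

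Adding the five bounds gives $|E| = O(|S|)$. The only step that is not bookkeeping is the star block of lines~\ref{line8}--\ref{line11}; once the disjointness of the $c$-leaf subtrees is observed, the whole argument rests on the linear size of the WSPD together with the observation that every ``special'' node (a $c$-node or a multichromatic node) sits inside an MWSPD pair.
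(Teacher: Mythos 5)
Your proof is correct and takes essentially the same block-by-block charging route as the paper: the paper likewise bounds the star edges of lines \ref{line8}--\ref{line11} by the pairwise disjointness of the $c$-leaf sets, the edges of lines \ref{line17}--\ref{line18} and \ref{line28}--\ref{line29} by the $O(|S|)$ pairs of the MWSPD, and the edges of lines \ref{line14}--\ref{line15} and \ref{line21}--\ref{line22} by (twice) the number of split-tree nodes, where you instead charge these last two blocks to $c$-nodes via their MWSPD pairs --- an immaterial bookkeeping difference. The details you supply (each $c$-node lies under a unique $c$-root, and distinct $c$-leaves have disjoint subtrees) are correct and simply make explicit what the paper leaves implicit.
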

\begin{proof}
For each color $c$ and for each $c$-leaf $u'$, the algorithm adds 
$|S_{u'}|$ edges to $G$ in lines \ref{line9}--\ref{line10}. 
Since the sets $S_{u'}$, where $u'$ ranges over all $c$-leaves and $c$ 
ranges over all colors, are pairwise disjoint, the total number of 
edges that are added in lines \ref{line9}--\ref{line10} is $O(|S|)$. 

The total number of edges that are added in lines 
\ref{line17}--\ref{line18} and \ref{line28}--\ref{line29} is at most 
the number of pairs in the MWSPD. Since the WSPD contains $O(|S|)$ 
pairs (see~\cite{callahan95}), the same upper bound holds for the 
number of edges added in lines \ref{line17}--\ref{line18} and 
\ref{line28}--\ref{line29}.  

The total number of edges that are added in lines 
\ref{line14}--\ref{line15} and \ref{line21}--\ref{line22}  
is at most twice the number of nodes in the split-tree, which is 
$O(|S|)$. 
\end{proof}

\begin{lemma}\label{lemma-levels}
Let $G$ be the graph computed by Algorithm~\ref{alg-biSpan}. 
Let $p$ and $q$ be two points of $S$ with different colors, and let 
$\{S_u, S_v\}$ be the pair in the MWSPD for which $p \in S_u$ and 
$q \in S_v$. Assume that $u$ is a $c$-node for some color $c$. Then 
there is a path in $G$ between $p$ and $\rep(S_u)$ whose length is 
at most $t' |pq|$, where 
\[ t' = 4 \sqrt{d} (\mu d+1) (1+4/s)^3 ,   
\] 
\[ \mu = \left\lceil \log \left( \sqrt{d} (1+4/s) \right) 
         \right\rceil + 1 , 
\]
and $s$ is the separation constant of the WSPD.  
\end{lemma}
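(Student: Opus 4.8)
The plan is to exhibit a single explicit path in $G$ from $p$ to $\rep(S_u)$ and bound its length edge by edge; the path will climb a chain of $c$-nodes. Since $p$ has colour $c$ and $u$ is a $c$-node with $p\in S_u$, the $c$-nodes containing $p$ form a chain of ancestors of the leaf $\{p\}$; let $u_0$ be the deepest one. First I would record the (small but essential) fact that $u_0$ is a $c$-leaf. If it had a $c$-child $z$, pick a point $q''$ of colour $\neq c$ in the MWSPD-partner $S_{v_z}$ of $S_z$, and consider the unique WSPD pair $\{S_\gamma,S_\delta\}$ with $p\in S_\gamma$, $q''\in S_\delta$; the set $S_\gamma$ cannot contain $S_z$, for otherwise $\{S_\gamma,S_\delta\}$ would also cover a point of $S_z$ together with $q''$, violating uniqueness of the WSPD pair; hence $S_\gamma\subsetneq S_{u_0}$, so $\gamma$ is a $c$-node containing $p$ strictly below $u_0$, contradicting the choice of $u_0$. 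Thus $u_0$ is the unique $c$-leaf containing $p$ and it lies in the subtree of $u$. Write $u_0=w_0,w_1,\dots,w_j=u$ for the consecutive $c$-ancestors, each $w_{i+1}$ the $c$-parent of $w_i$; then $S_{w_0}\subseteq S_{w_1}\subseteq\dots\subseteq S_{w_j}=S_u$, all monochromatic of colour $c$.

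For each $i$ let $z_i$ be the representative of $\cl(S_{w_i})$ selected by the algorithm; its colour is not $c$, and it exists because $\cl(S_{w_i})$ is the partner of a monochromatic-$c$ set in an MWSPD pair. The star built at the $c$-leaf $u_0$ (lines~\ref{line8}--\ref{line11}) contributes the edge $(p,z_0)$; line~\ref{line14}--\ref{line15} at the $c$-node $w_i$ contributes $(\rep(S_{w_i}),z_i)$; and line~\ref{line21}--\ref{line22} at $w_i$ with its $c$-child $w_{i-1}$ contributes $(\rep(S_{w_{i-1}}),z_i)$. Concatenating these edges yields the walk
\[ p,\ z_0,\ \rep(S_{w_0}),\ z_1,\ \rep(S_{w_1}),\ \dots,\ z_j,\ \rep(S_{w_j})=\rep(S_u) \]
in $G$, consisting of $2(j+1)$ edges. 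To bound one of them, observe that $u$ is a $c$-ancestor of every $w_i$, so $\{S_u,S_v\}$ is among the candidates defining $\cl(S_{w_i})$; hence if $\{S_{x_i},S_{y_i}\}$ is the MWSPD pair realizing $\cl(S_{w_i})$ then $\ell_i:=\dist(S_{x_i},S_{y_i})\le\dist(S_u,S_v)$. Since $p$, $\rep(S_{w_i})$ and $\rep(S_{w_{i-1}})$ all lie in $S_{w_i}\subseteq S_{x_i}$ while $z_i\in S_{y_i}$, each edge at level $i$ is a cross-edge of the well-separated pair $\{S_{x_i},S_{y_i}\}$; using that each of its two bounding boxes has diameter at most $(2/s)\ell_i$ and Lemma~\ref{lemma-912}, such an edge has length at most $\ell_i(1+2/s)$, and since $p\in S_u$, $q\in S_v$ are well separated, $\dist(S_u,S_v)\le(1+2/s)|pq|$. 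Thus every edge of the walk has length at most $(1+4/s)^2|pq|$.

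The crux is to bound the total length without trying to bound $j$ itself, since the $c$-chain below $u$ can be long and most of the $\ell_i$ can be comparable to $\dist(S_u,S_v)$. The two facts I would use are that $\ell_0\le\ell_1\le\dots\le\ell_j$ (as $i$ grows, the minimization defining $\cl$ runs over fewer candidates) and $\ell_j\le\dist(S_u,S_v)=:\Delta$. Group the chain into dyadic phases: phase $m$ is the contiguous block of indices $i$ with $\ell_i\in(2^{-m-1}\Delta,\,2^{-m}\Delta]$. For $w_i$ in phase $m$, well-separation of $\{S_{x_i},S_{y_i}\}$ gives $L_{\max}(\beta(S_{w_i}))\le L_{\max}(\beta(S_{x_i}))\le(2/s)\,2^{-m}\Delta$, while Lemma~\ref{lemma-1131} applied to an MWSPD pair containing $w_i$ gives $L_{\max}(\beta(S_{\pi(w_i)}))\ge 2^{-m}\Delta/(\sqrt d\,(s+4))$. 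Since $\pi(w_i)$ is a descendant of $w_{i+1}$, the largest and smallest boxes occurring inside one phase differ by a factor $O(\sqrt d\,(1+4/s))$, so by Lemma~\ref{lemma-953} the tree-path swept out by a phase has $O\!\big(d\log(\sqrt d(1+4/s))\big)=O(\mu d)$ edges, hence the phase contains $O(\mu d)$ $c$-nodes. Therefore phase $m$ contributes at most $O(\mu d)$ edges of the walk, each of length $O(2^{-m}\Delta)$; summing the geometric series $\sum_{m\ge0}2^{-m}$ gives total length $O(\mu d)\cdot\Delta=O(\mu d)|pq|$, and a careful tracking of the constants ($\Delta\le(1+2/s)|pq|$ and the per-edge factor $(1+4/s)^2$, with the $\sqrt d$ coming from Lemma~\ref{lemma-1131}) gives the stated $t'=4\sqrt d(\mu d+1)(1+4/s)^3$.

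The hardest part, I expect, is exactly this dyadic-phase accounting: pinning the box sizes within a phase from both sides, handling parents and single-point leaves carefully when invoking Lemmas~\ref{lemma-953} and~\ref{lemma-1131}, and then checking that the constants really collapse to the value of $t'$ in the statement. The structural observation that $p$ always lies in a $c$-leaf is a secondary but indispensable point, since the whole path is anchored at the star attached to that leaf.
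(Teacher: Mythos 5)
Your path is exactly the paper's: you climb the chain of $c$-nodes $w_0,\dots,w_j$ from the $c$-leaf containing $p$ up to $u$, alternate between $\rep(\cl(S_{w_i}))$ and $\rep(S_{w_i})$ via the edges added in lines \ref{line8}--\ref{line11}, \ref{line14}--\ref{line15} and \ref{line21}--\ref{line22}, and bound each edge by $(1+O(1/s))\,\ell_i$ with $\ell_i=\dist(S_{x_i},S_{y_i})$; up to this point the two arguments coincide, and your preliminary verification that the deepest $c$-node containing $p$ really is a $c$-leaf is a welcome addition (the paper simply asserts the existence of the $c$-leaf $w$ with $p\in S_w$). Where you genuinely diverge is the bookkeeping for $\sum_i\ell_i$: the paper combines $a_i\le(2/s)\ell_i$, the halving of box sizes every $d$ levels (Lemma~\ref{lemma-953}) and $\ell_i\le\frac{\sqrt d(s+4)}{2}a_{i+1}$ (Lemma~\ref{lemma-1131}) into the explicit recurrence $\ell_i\le\frac12\ell_{i+1+\mu d}$ and then sums over residue classes modulo $\mu d+1$, bounding the top $\mu d+1$ terms by $\sqrt d(1+4/s)\ell_k$; you instead exploit the monotonicity of $(\ell_i)$ to cut the chain into dyadic phases according to the value of $\ell_i$, bound the number of $c$-nodes per phase by $O(\mu d)$ using the same three inequalities, and sum a geometric series over phases. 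These are two bookkeepings of the same estimate (same path, same quantities, same three lemmas), and your version even avoids the extra $\sqrt d(1+4/s)$ loss the paper incurs on the top block, so it plausibly yields a constant no worse than $t'$. The one soft spot is that you assert, rather than track, that the constants collapse to the stated $t'=4\sqrt d(\mu d+1)(1+4/s)^3$: with a per-phase count of roughly $\mu d+d+1$ nodes (the exact value depends on how the roundings in Lemmas~\ref{lemma-953} and~\ref{lemma-1131} are handled) the final constant lands in the same ballpark as $t'$ but is not checked against it, and since the lemma claims this specific $t'$ and you yourself flag this accounting as the hardest step, it needs to be carried out to finish the proof as stated (for the paper's later use, any constant of this form would do).
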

\begin{proof} 
Let $w$ be the $c$-leaf such that $p \in S_w$, and let 
$w=w_0,\ldots,w_k=u$ be the sequence of $c$-nodes that are on the path 
in $T$ from $w$ to $u$. 

Recall from Definition~\ref{defMWSPD} that each set $S_{w_i}$, 
$0 \leq i \leq k$, has a representative $\rep(S_{w_i})$ (of color $c$) 
associated with it. Also, recall the definition of the sets 
$\cl(S_{w_i})$, $0 \leq i \leq k$; see Definition~\ref{defMWSPD}. 
If $\cl(S_{w_i})$ is a $c'$-node for some color $c'$, then this set 
has one representative $\rep(\cl(S_{w_i}))$ associated with it. 
Otherwise, $\cl(S_{w_i})$ is multichromatic and this set has two 
representatives $\rep(\cl(S_{w_i}))$ and $\rep'(\cl(S_{w_i}))$ of 
different colors associated with it. We may assume without loss of 
generality that, for all $0 \leq i \leq k$, the color of 
$\rep(\cl(S_{w_i}))$ is not equal to $c$.  
 
Let $\Pi$ be the path  
\[ 
\begin{array}{ccccccc} 
  p & \rightarrow & \rep(\cl(S_{w_0})) & \rightarrow & \rep(S_{w_0}) 
         & & \\
  & \rightarrow & \rep(\cl(S_{w_1})) & \rightarrow & \rep(S_{w_1}) 
         & & \\ 
  & \vdots & & \vdots & & & \\ 
  & \rightarrow & \rep(\cl(S_{w_k})) & \rightarrow & \rep(S_{w_k}) 
              & = & \rep(S_u) . 
\end{array} 
\]  
The first edge on this path, i.e., $(p , \rep(\cl(S_{w_0})))$, is 
added to the graph $G$ in lines \ref{line9}--\ref{line10} of the 
algorithm. The edges $( \rep(\cl(S_{w_i})) , \rep(S_{w_i}) )$, 
$0 \leq i \leq k$, are added to $G$ in lines 
\ref{line14}--\ref{line15}. Finally, the edges  
$( \rep(S_{w_{i-1}}) , \rep(\cl(S_{w_i})) )$, $1 \leq i \leq k$, 
are added to $G$ in lines \ref{line21}--\ref{line22}.
It follows that $\Pi$ is a path in $G$ between $p$ and $\rep(S_u)$. 
We will show that the length of $\Pi$ is at most $t' |pq|$. 

Let $i$ be an integer with $0 \leq i \leq k$. Recall the definition 
of $\cl(S_{w_i})$; see Definition~\ref{defMWSPD}: We consider all 
pairs $\{ S_x,S_y \}$ in the MWSPD, where $x$ is a $c$-node on the 
path in $T$ from $w_i$ to the root, and pick the pair for which 
$\dist(S_x,S_y)$ is minimum. We denote the pair picked by 
$(S_{x_i},S_{y_i})$. Thus, $x_i$ is a $c$-node on the path in $T$ 
from $w_i$ to the root, $\{ S_{x_i},S_{y_i} \}$ is a pair in the MWSPD, 
and $\cl(S_{w_i}) = S_{y_i}$. We define 
\[ \ell_i = \dist(S_{x_i},S_{y_i}) . 
\] 

Consider the first edge $(p , \rep(\cl(S_{w_0})))$ on the path 
$\Pi$. Since $p \in S_{w_0} \subseteq S_{x_0}$ and 
$\rep(\cl(S_{w_0})) \in S_{y_0}$, it follows from 
Lemma~\ref{lemma-912} that 
\[ | p , \rep(\cl(S_{w_0})) | \leq 
         (1+4/s) \cdot \dist(S_{x_0},S_{y_0}) = (1+4/s) \ell_0 . 
\] 
Let $0 \leq i \leq k$ and consider the edge 
$(\rep(\cl(S_{w_i})) , \rep(S_{w_i}))$ on $\Pi$. 
Since $\rep(S_{w_i}) \in S_{w_i} \subseteq S_{x_i}$ and 
$\rep(\cl(S_{w_i})) \in S_{y_i}$, it follows from 
Lemma~\ref{lemma-912} that
\begin{equation}     
   | \rep(\cl(S_{w_i})) , \rep(S_{w_i}) | \leq 
         (1+4/s) \cdot \dist(S_{x_i},S_{y_i}) = (1+4/s) \ell_i .  
      \label{paz23} 
\end{equation} 
Let $1 \leq i \leq k$ and consider the edge 
$(\rep(S_{w_{i-1}}) , \rep(\cl(S_{w_i})))$ on $\Pi$. 
Since $\rep(S_{w_{i-1}}) \in S_{w_{i-1}} \subseteq S_{x_i}$ 
and $\rep(\cl(S_{w_i})) \in S_{y_i}$, it follows from 
Lemma~\ref{lemma-912} that 
\[ | \rep(S_{w_{i-1}}) , \rep(\cl(S_{w_i})) | \leq  
         (1+4/s) \cdot \dist(S_{x_i},S_{y_i}) = (1+4/s) \ell_i . 
\]
Thus, the length of the path $\Pi$ is at most 
\[ \sum_{i=0}^k 2 (1+4/s) \ell_i . 
\] 
Therefore, it is sufficient to prove that 
\[ \sum_{i=0}^k \ell_i \leq 2 \sqrt{d} (\mu d+1) ( 1+4/s )^2 |pq| .    
\] 
It follows from the definition of $\cl(S_u) = \cl(S_{w_k})$ that 
\[ \ell_k = \dist(S_{x_k},S_{y_k}) \leq \dist(S_u,S_v) . 
\] 
Since, by Lemma~\ref{lemma-912}, $\dist(S_u,S_v) \leq (1+4/s) |pq|$, 
it follows that 
\begin{equation}    \label{paz10}  
       \ell_k \leq (1+4/s) |pq| .
\end{equation} 
Thus, it is sufficient to prove that 
\begin{equation}    \label{eqtoshow} 
    \sum_{i=0}^k \ell_i \leq 2 \sqrt{d} (\mu d+1) (1+4/s) \ell_k .    
\end{equation} 

If $k=0$, then (\ref{eqtoshow}) obviously holds. Assume from now 
on that $k \geq 1$. For each $i$ with $0 \leq i \leq k$, we define 
\[ a_i = L_{\max}(\beta(S_{w_i})) , 
\] 
i.e., $a_i$ is the length of a longest side of the bounding box of 
$S_{w_i}$. 

Let $0 \leq i \leq k$. It follows from Lemma~\ref{lemma-912} that 
\[ L_{\max}(\beta(S_{x_i})) \leq \frac{2}{s} \ell_i . 
\] 
Since $w_i$ is in the subtree of $x_i$, we have 
$L_{\max}(\beta(S_{w_i})) \leq L_{\max}(\beta(S_{x_i}))$.  
Thus, we have 
\begin{equation} 
  a_i \leq \frac{2}{s} \ell_i \mbox{ for $0 \leq i \leq k$.}   
             \label{paz1}  
\end{equation} 
Lemma~\ref{lemma-953} states that 
\begin{equation} 
  a_i \leq \frac{1}{2} a_{i+d}  \mbox{ for $0 \leq i \leq k-d$.} 
             \label{paz2}  
\end{equation} 
Let $0 \leq i \leq k-1$. Since $w_i$ is a $c$-node, there is a 
node $w'_i$ such that $\{ S_{w_i} , S_{w'_i} \}$ is a pair in 
the MWSPD. Then it follows from the definition of $\cl(S_{w_i})$ that 
\[ \ell_i = \dist(S_{x_i},S_{y_i}) \leq \dist(S_{w_i},S_{w'_i}) . 
\] 
By applying Lemma~\ref{lemma-1131}, we obtain 
\begin{eqnarray*}   
  \dist(S_{w_i},S_{w'_i}) & \leq &  
    \frac{\sqrt{d}(s+4)}{2} \, L_{\max}(\beta(S_{\pi(w_i)})) \\ 
  & \leq & \frac{\sqrt{d}(s+4)}{2} \, L_{\max}(\beta(S_{w_{i+1}})) \\ 
  & = & \frac{\sqrt{d}(s+4)}{2} \, a_{i+1} . 
\end{eqnarray*}   
Thus, we have 
\begin{equation} 
  \ell_i \leq \frac{\sqrt{d}(s+4)}{2} \, a_{i+1} 
          \mbox{ for $0 \leq i \leq k-1$.} 
           \label{paz3} 
\end{equation} 

First assume that $1 \leq k \leq \mu d$. Let $0 \leq i \leq k-1$. 
By using (\ref{paz3}), the fact that the sequence $a_0,a_1,\ldots,a_k$ 
is non-decreasing, and (\ref{paz1}), we obtain  
\[ \ell_i \leq \frac{\sqrt{d}(s+4)}{2} \, a_{i+1} \leq 
     \frac{\sqrt{d}(s+4)}{2} \, a_k \leq 
     \sqrt{d} (1+4/s) \ell_k .
\] 
Therefore, 
\[ \sum_{i=0}^k \ell_i \leq k \sqrt{d} (1+4/s) \ell_k + \ell_k  
      \leq (k+1) \sqrt{d} (1+4/s) \ell_k 
      \leq (\mu d +1) \sqrt{d} (1+4/s) \ell_k , 
\] 
which is less than the right-hand side in (\ref{eqtoshow}). 

It remains to consider the case when $k > \mu d$. 
Let $i \geq 0$ and $j \geq 0$ be integers such that $i+1+jd \leq k$. 
By applying (\ref{paz3}) once, (\ref{paz2}) $j$ times, 
and (\ref{paz1}) once, we obtain 
\[ \ell_i \leq \frac{\sqrt{d} (s+4)}{2} \, a_{i+1}  
          \leq \frac{\sqrt{d} (s+4)}{2} \left( \frac{1}{2} \right)^j 
                     a_{i+1+jd}   
          \leq \sqrt{d} (1+4/s) \left( \frac{1}{2} \right)^j 
                    \ell_{i+1+jd} . 
\]
For 
$j = \mu = \lceil \log ( \sqrt{d} (1+4/s) ) \rceil + 1$, 
this implies that, for $0 \leq i \leq k - 1 - \mu d$,  
\begin{equation}    \label{paz4} 
    \ell_i \leq \frac{1}{2} \ell_{i+1 + \mu d} .
\end{equation} 
By re-arranging the terms in the summation in (\ref{eqtoshow}), we 
obtain 
\[ \sum_{i=0}^k \ell_i = 
      \sum_{h=0}^{\mu d}  \ \ 
      \sum_{j=0}^{\lfloor (k-h)/(\mu d +1) \rfloor} 
              \ell_{k - h - j (\mu d + 1)} . 
\]  
Let $j$ be such that $0 \leq j \leq \lfloor (k-h)/(\mu d +1) \rfloor$. 
By applying (\ref{paz4}) $j$ times, we obtain 
\[ \ell_{k - h - j (\mu d + 1)} \leq 
    \left( \frac{1}{2} \right)^j \ell_{k-h} .
\] 
It follows that 
\[ \sum_{j=0}^{\lfloor (k-h)/(\mu d +1) \rfloor} 
          \ell_{k - h - j (\mu d + 1)}  
      \leq 
   \sum_{j=0}^{\infty} \left( \frac{1}{2} \right)^j \ell_{k-h}  
         = 2 \ell_{k-h} . 
\] 
Thus, we have 
\[ \sum_{i=0}^k \ell_i \leq 2 \sum_{h=0}^{\mu d} \ell_{k-h} . 
\] 
By applying (\ref{paz3}), the fact that the sequence 
$a_0,a_1,\ldots,a_k$ is non-decreasing, followed by (\ref{paz1}), 
we obtain, for $0 \leq i \leq k-1$ and $1 \leq j \leq k-i$,  
\[ \ell_i \leq \frac{\sqrt{d}(s+4)}{2} \, a_{i+1} 
         \leq \frac{\sqrt{d}(s+4)}{2} \, a_{i+j}  
         \leq \sqrt{d} (1+4/s) \ell_{i+j} .  
\]  
Obviously, the inequality $\ell_i \leq \sqrt{d} (1+4/s) \ell_{i+j}$ 
also holds for $j=0$. Thus, for $i=k-h$ and $j=h$, we get 
\[ \ell_{k-h} \leq \sqrt{d} (1+4/s) \ell_k 
          \mbox{ for $0 \leq h \leq \mu d$.} 
\] 
It follows that
\[ \sum_{i=0}^k \ell_i \leq  
     2 \sum_{h=0}^{\mu d} \sqrt{d} (1+4/s) \ell_k  
     = 2 \sqrt{d} (\mu d  + 1) (1+4/s) \ell_k , 
\] 
completing the proof that (\ref{eqtoshow}) holds.  
\end{proof}

\begin{lemma}
\label{thm-pathS}
Assuming that the separation constant $s$ of the WSPD is chosen 
sufficiently large, the graph $G$ computed by 
Algorithm~\ref{alg-biSpan} is a $t$-spanner of the complete 
$k$-partite graph $K_{C_1 \ldots C_k}$, where $t=2t'+1+4/s$ and 
$t'$ is as in Lemma~\ref{lemma-levels}. 
\end{lemma}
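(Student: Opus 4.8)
The plan is to run the standard induction on the rank of pairwise distances that underlies all WSPD-based spanner constructions. First I would reduce the statement to edges of $K$: since a shortest path in $K$ between any two points decomposes into edges of $K$, and (as one checks directly from Algorithm~\ref{alg-biSpan}) every edge it adds joins two points of distinct colors, it suffices to prove $\delta_G(p,q)\le t\,|pq|$ for every pair $p,q\in S$ of distinct colors; for such a pair $\overline{pq}$ is an edge of $K$, so $\delta_K(p,q)=|pq|$, and the general bound (including same-color pairs, which are joined in $K$ since $k\ge 2$) follows by concatenation. Order all distinctly-colored pairs by Euclidean length and assume inductively that $\delta_G(p',q')\le t\,|p'q'|$ for every distinctly-colored pair with $|p'q'|<|pq|$; this is well founded as soon as $s>2$, since every recursive bound below is for a pair at distance at most $(2/s)|pq|<|pq|$.

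Fix a distinctly-colored pair $p,q$, and let $\{S_u,S_v\}$ be the MWSPD pair with $p\in S_u$ and $q\in S_v$ (it is in the MWSPD exactly because $p,q$ have different colors). Recall that each of $S_u,S_v$ is either a $c$-node or multichromatic. The key intermediate estimate is: \emph{there is a representative $r_u\in S_u$ that the algorithm actually uses as the $S_u$-endpoint of the edge it outputs for $\{S_u,S_v\}$, and $\delta_G(p,r_u)\le t'\,|pq|$}; symmetrically for some $r_v\in S_v$. If $S_u$ is a $c$-node this is precisely Lemma~\ref{lemma-levels}, with $r_u=\rep(S_u)$. If $S_u$ is multichromatic then $r_u\in\{\rep(S_u),\rep'(S_u)\}$, and these two have distinct colors: if $r_u$'s color differs from $p$'s, reach it directly from $p$ — a distinctly-colored pair at distance $\le(2/s)|pq|$ by Lemma~\ref{lemma-912}(1), hence $\le(2t/s)|pq|$ by induction; otherwise reach the other representative (whose color differs from $p$'s) and append the edge between the two representatives (again a distinctly-colored pair at distance $\le(2/s)|pq|$). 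Either way $\delta_G(p,r_u)\le(4t/s)|pq|$, which is $\le t'\,|pq|$ once $s$ is large, because $t'$ is bounded below by an absolute constant while $t=2t'+1+4/s$.

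Now I would combine the pieces. Depending on whether each of $S_u,S_v$ is monochromatic or multichromatic, the pair $\{S_u,S_v\}$ is treated in lines \ref{line17}--\ref{line18} (when at least one side is a $c$-node) or in lines \ref{line28}--\ref{line29} (both multichromatic); in every case the edge placed in $E$ is $(r_u,r_v)$ for representatives $r_u\in S_u$, $r_v\in S_v$ of the kind described above (and of distinct colors). By Lemma~\ref{lemma-912}(2), $|r_ur_v|\le(1+4/s)|pq|$, so
\[
\delta_G(p,q)\le \delta_G(p,r_u)+|r_ur_v|+\delta_G(r_v,q)\le t'\,|pq|+(1+4/s)|pq|+t'\,|pq|=t\,|pq|,
\]
which closes the induction. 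This also explains the shape of $t$: two descents to a representative, each costing $\le t'\,|pq|$, plus one bridging edge costing $\le(1+4/s)|pq|$.

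The main obstacle I expect is the bookkeeping of representatives: verifying that in each of the algorithm's cases the edge actually output has as its endpoints the very representatives one reaches cheaply from $p$ and from $q$, and correctly handling the case where $p$ has the same color as $\rep(S_u)$ (which forces the detour through $\rep'(S_u)$). Conceptually the heavy work — building the short path from $p$ up the chain of $c$-nodes and their $\cl(\cdot)$-sets to $\rep(S_u)$ — is already contained in Lemma~\ref{lemma-levels}; what remains here is to glue that path to the single bridging edge and to the symmetric path on the $q$-side, to confirm the recursive distances are all strictly smaller (needing $s>2$), and to absorb $4t/s$ into $t'$ (needing $s$ sufficiently large).
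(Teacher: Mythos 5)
Your proposal is correct and follows essentially the same route as the paper's proof: induction on the lengths of the distinctly-colored pairs, Lemma~\ref{lemma-levels} to reach $\rep(S_u)$ when $u$ is a $c$-node, a detour through a differently-colored point handled by the inductive hypothesis when a side is multichromatic, the bridging edge bounded by $(1+4/s)|pq|$ via Lemma~\ref{lemma-912}, and a sufficiently large $s$ to absorb the $4t/s$ terms. The only differences are cosmetic: you bound each side uniformly by $t'|pq|$ (using $4t/s\le t'$) rather than writing out the paper's separate Case~2--4 totals, and you route the multichromatic detour through $\rep'(S_u)$ instead of an arbitrary differently-colored point of the set, both of which are valid.
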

\begin{proof}
We denote the graph $K_{C_1 \ldots C_k}$ by $K$. 
It suffices to show that for each edge $(p,q)$ of $K$, the graph 
$G$ contains a path between $p$ and $q$ of length at most $t|pq|$. 
We will prove this by induction on the lengths of the edges in $K$. 

Let $p$ and $q$ be two points of $S$ with different colors, and let 
$\{S_u, S_v\}$ be the pair in the MWSPD for which $p \in S_u$ and 
$q \in S_v$. 

The base case is when $(p,q)$ is a shortest edge in $K$.  
Since $s>2$, it follows from Lemma~\ref{lemma-912} that $u$ is a 
$c$-node and $v$ is a $c'$-node, for some colors $c$ and $c'$ with 
$c \neq c'$. In line~\ref{line17} of Algorithm~\ref{alg-biSpan}, the 
edge $(\rep(S_u),\rep(S_v))$ is added to $G$. By Lemma~\ref{lemma-912}, 
the length of this edge is at most $(1+4/s)|pq|$. The claim follows 
from two applications of Lemma~\ref{lemma-levels} to get from $p$ to 
$\rep(S_u)$ and from $\rep(S_v)$ to $q$.

In the induction step, we distinguish four cases. 

\vspace{0.5em} 

\noindent 
{\bf Case 1:} $u$ is a $c$-node and $v$ is a $c'$-node, for some 
colors $c$ and $c'$ with $c \neq c'$. 

This case is identical to the base case.

\vspace{0.5em} 

\noindent 
{\bf Case 2:} $u$ is a $c$-node for some color $c$ and $v$ is a 
multichromatic node. 

In lines \ref{line17}--\ref{line18}, the edge 
$( \rep(S_u) , \rep(S_v) )$ or $( \rep(S_u) , \rep(S'_v) )$ 
is added to $G$. We may assume without loss of generality that 
$( \rep(S_u) , \rep(S_v) )$ is added. 
By Lemma~\ref{lemma-912}, the length of this edge is at most 
$(1+4/s)|pq|$. 

By Lemma~\ref{lemma-levels}, there is a path in $G$ between $p$ and 
$\rep(S_u)$ whose length is at most $t'|pq|$. 

First assume that $q$ and $\rep(S_v)$ have the same color. 
Let $r$ be a point in $S_v$ that has a color different from $q$'s 
color. Since $s>2$, it follows from Lemma~\ref{lemma-912} that 
$|qr|<|pq|$. Thus, by induction, there is a path in $G$ between $q$ 
and $r$ whose length is at most $t|qr|$, which, by 
Lemma~\ref{lemma-912}, is at most $(2t/s)|pq|$. 
By a similar argument, since $|r , \rep(S_v) | < |pq|$, 
there is a path in $G$ between $r$ and $\rep(S_v)$ whose length is 
at most $(2t/s)|pq|$. Thus, $G$ contains a path between $q$ and 
$\rep(S_v)$ of length at most $(4t/s)|pq|$. 
If $q$ and $\rep(S_v)$ have different colors, then, by induction, 
there is a path in $G$ between $q$ and $\rep(S_v)$ whose length is 
at most $(2t/s)|pq| < (4t/s)|pq|$. 

Thus, the graph $G$ contains a path between $q$ and $\rep(S_v)$ of 
length at most $(4t/s)|pq|$. 

We have shown that there is a path in $G$ between $p$ and $q$ 
whose length is at most 
\begin{equation}   
       \left( t' + (1+4/s) + 4t/s \right) |pq| .  \label{paz6} 
\end{equation}   
By choosing $s$ sufficiently large, this quantity is at most $t|pq|$. 

\vspace{0.5em} 

\noindent 
{\bf Case 3:} $u$ is a multichromatic node and $v$ is a $c$-node for 
some color $c$. 

This case is symmetric to Case 2. 

\vspace{0.5em} 

\noindent 
{\bf Case 4:} Both $u$ and $v$ are multichromatic nodes.

In lines \ref{line28}--\ref{line29}, the edge 
$( \rep(S_u) , \rep(S_v) )$ or $( \rep(S_u) , \rep(S'_v) )$ 
is added to $G$. We may assume without loss of generality that 
$( \rep(S_u) , \rep(S_v) )$ is added. 
By Lemma~\ref{lemma-912}, the length of this edge is at most 
$(1+4/s)|pq|$. 

As in Case~2, the graph $G$ contains a path between $p$ and 
$\rep(S_u)$ of length at most $(4t/s)|pq|$, and a path between 
$q$ and $\rep(S_v)$ of length at most $(4t/s)|pq|$. 

It follows that there is a path in $G$ between $p$ and $q$ whose 
length is at most 
\begin{equation}  
       \left( (1+4/s) + 8t/s \right) |pq| .  \label{paz666}
\end{equation} 
By choosing $s$ sufficiently large, this quantity is at most $t|pq|$. 
\end{proof}

\begin{lemma}    \label{lemtime} 
The running time of Algorithm~\ref{alg-biSpan} is $O(n \log n)$, 
where $n = |S|$. 
\end{lemma}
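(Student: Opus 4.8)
The plan is to charge the cost of Algorithm~\ref{alg-biSpan} to its successive phases and to show that only the construction of the split-tree and of the WSPD costs more than linear time. By Callahan and Kosaraju~\cite{callahan95}, the split-tree $T$ of $S$ and a WSPD of $S$ with respect to a constant separation constant $s$ can be computed in $O(n\log n)$ time, and the resulting WSPD has $O(n)$ pairs. Since the MWSPD is a subset of the WSPD, it also has $O(n)$ pairs and can be extracted in $O(n)$ time once we know, for every node $u$, whether $S_u$ is monochromatic and, if so, its color: the pair $\{S_u,S_v\}$ is discarded exactly when $S_u$ and $S_v$ are monochromatic of the same color. Thus it suffices to prove that every step performed after the WSPD has been built runs in $O(n)$ time.

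First I would attach to every node of $T$, by one bottom-up traversal, the data needed to classify it: whether $S_u$ is monochromatic and, if so, its common color; the point $\rep(S_u)$; and, when $S_u$ is not monochromatic, a second point $\rep'(S_u)$ of a color different from that of $\rep(S_u)$. All of this is obtained in $O(1)$ per node by combining the corresponding data of the two children, so this traversal takes $O(n)$ time. Next, scanning the $O(n)$ pairs of the MWSPD once, I would mark every node that is an endpoint of some MWSPD pair and simultaneously build, for each node, a list of the MWSPD pairs incident to it; this costs $O(n)$ time and space. A node is then multichromatic iff it is marked and $S_u$ is not monochromatic, and it is a $c$-node iff it is marked and $S_u$ is monochromatic of color $c$; hence all multichromatic nodes and all $c$-nodes are identified in $O(n)$ total time. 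Finally, the $c$-roots are found by a top-down traversal that remembers whether a $c$-node has already been seen on the current root-to-node path, the $c$-leaves by a symmetric bottom-up traversal, and the $c$-children of a $c$-node are read off while walking down the tree between consecutive $c$-nodes; each of these is a constant number of linear-time traversals of $T$.

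The one step that requires a little care — and which I expect to be the main obstacle — is computing the sets $\cl(S_u)$ for all $c$-nodes $u$. For a fixed $c$-node $v$, using the incidence lists built above, compute $m(v)$, the minimum of $\dist(S_v,S_w)$ over all MWSPD pairs $\{S_v,S_w\}$, together with a witness node $w$; each $\dist$ value is evaluated in $O(d)=O(1)$ time from the bounding-box centers stored in $T$, so computing $m(\cdot)$ for all nodes costs $O(n)$ in total because $\sum_v (\text{number of MWSPD pairs incident to }v) = O(n)$. By Definition~\ref{defMWSPD}, $\cl(S_u)$ is the set $S_w$ of the pair attaining $\min\{\, m(v) : v \text{ is a } c\text{-node on the path from } u \text{ to the root}\,\}$. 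Since every $c$-node lies inside the maximal monochromatic subtree containing it, this path-minimum is confined to that subtree; therefore a single top-down traversal of $T$ that maintains the running minimum of $m(v)$ over the $c$-nodes encountered so far assigns the correct $\cl(S_u)$ to every $c$-node $u$, again in $O(n)$ time.

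It remains to bound the cost of the main loops, and each is linear for the same reason that it contributes only $O(n)$ edges in Lemma~\ref{thm-numberOfEdges}. Lines \ref{line8}--\ref{line11} perform $O(1)$ work per point $p\in S_{u'}$ summed over all $c$-leaves $u'$ and all colors $c$, and these sets $S_{u'}$ are pairwise disjoint, so this is $O(n)$. Lines \ref{line14}--\ref{line15} and \ref{line21}--\ref{line22} perform $O(1)$ work per $c$-node and per $c$-child, hence $O(n)$ over all colors; lines \ref{line17}--\ref{line18} and \ref{line28}--\ref{line29} perform $O(1)$ work per MWSPD pair, hence $O(n)$. Adding up, every step after the WSPD construction runs in $O(n)$ time, so the total running time of Algorithm~\ref{alg-biSpan} is $O(n\log n)$.
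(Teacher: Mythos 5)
Your proposal is correct and follows the same route as the paper's proof: charge the $O(n\log n)$ cost to the Callahan--Kosaraju construction of the split-tree and WSPD, and show that all remaining work (classifying nodes, computing representatives and the sets $\cl(S_u)$ via pre/post-order traversals, and the main loops charged per point, per node, and per MWSPD pair) is $O(n)$. Your write-up simply fills in implementation details that the paper leaves implicit.
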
 
\begin{proof} 
Using the results of Callahan and Kosaraju~\cite{callahan95}, the 
split-tree $T$ and the WSPD can be computed in $O(n \log n)$ time. 
The representatives of all sets $S_u$ and all sets $\cl(S_u)$ can be 
computed in $O(n)$ time by traversing the split-tree in post-order 
and pre-order, respectively. The time for the rest of the algorithm, 
i.e., lines \ref{line3}--\ref{line31}, is proportional to the sum of 
the size of $T$, the number of pairs in the WSPD and the number of 
edges in the graph $G$. Thus, the rest of the algorithm takes $O(n)$ 
time. 
\end{proof} 

To summarize, we have shown the following: For any complete $k$-partite 
geometric graph $K$ whose vertex set has size $n$, 
Algorithm~\ref{alg-biSpan} computes a $t$-spanner of $K$ having $O(n)$ 
edges, where $t$ is given in Lemma~\ref{thm-pathS}. The running time 
of this algorithm is $O(n \log n)$. By choosing the separation constant 
$s$ sufficiently large, the stretch factor $t$ converges to  
\[ 8 \sqrt{d} 
   \left( d \left\lceil \frac{1}{2} \log d \right\rceil + d + 1 
   \right) + 1 .
\] 

In the next section, we show how to modify the algorithm so that 
the bound in Lemma~\ref{lemma-levels} is reduced, thus improving the 
stretch factor. The price to pay is in the number of edges in $G$, 
however, it is still $O(n)$.

\section{An Improved Algorithm}\label{section-improved-algo}
As before, we are given a set $S$ of $n$ points in $\mathbb{R}^d$ 
which is partitioned into $k$ subsets $C_1,C_2,\ldots,C_k$. 
Intuitively, the way to improve the bound of Lemma~\ref{lemma-levels} 
is by adding shortcuts along the path from each $c$-leaf to the 
$c$-root above it. More precisely, from (\ref{paz4}) in the proof of 
Lemma~\ref{lemma-levels}, we know that if we go $1+\mu d$ levels up in 
the split-tree $T$, then the length of the edge along the path doubles. 
Thus, for each $c$-node in $T$, we will add edges to all 
$2\delta (1+\mu d)$ $c$-nodes above it in $T$. Here, $\delta$ is an 
integer constant that is chosen such that the best result is obtained 
in the improved bound. 

\begin{definition} 
Let $c \in \{ 1,2,\ldots,k \}$, and let $u$ and $u'$ be $c$-nodes in 
the split-tree $T$ such that $u'$ is in the subtree rooted at $u$. 
For any integer $\zeta \geq 1$, we say that $u$ is $\zeta$ 
\emph{levels above} $u'$, if there are exactly $\zeta - 1$ $c$-nodes 
on the path strictly between $u$ and $u'$. We say that $u'$ is a 
\emph{$\zeta$-level $c$-child} of $u$ if $u$ is at most $\zeta$ levels 
above $u'$. 
\end{definition}

The improved algorithm is given as Algorithm~\ref{alg-improveB}. 
The following lemma generalizes Lemma~\ref{lemma-levels}. 

\begin{algorithm}
\linesnumberedhidden
\KwIn{A set $S$ of points in $\mathbb{R}^d$, which is partitioned 
      into $k$ subsets $C_1,\ldots,C_k$, and a real constant 
      $0 < \epsilon < 1$.}  
\KwOut{A $(5+\epsilon)$-spanner $G=(S,E)$ of the complete $k$-partite 
       graph $K_{C_1\ldots C_k}$.} 
\BlankLine   
Choose a separation constant $s$ such that $s \geq 12/\epsilon$ 
and $(1+4/s)^2 \leq 1 + \epsilon/36$ and choose an integer constant 
$\delta$ such that 
$\frac{2^{\delta}}{2^{\delta}-1} \leq 1 + \epsilon/36$.\\ 
The rest of the algorithm is the same as Algorithm~\ref{alg-biSpan}, 
except for lines \ref{line20}--\ref{line23}, which are replaced by the 
following:\\ 
\BlankLine 
$\zeta \leftarrow 2 \delta (\mu d +1)$\;    
\For{\emph{each $\zeta$-level $c$-child $u''$ of $u'$}}{ 
    \lIf{\emph{$\rep(\cl(S_{u'}))$ does not have color $c$}} 
        {add $(\rep(S_{u''}),\rep(\cl(S_{u'})))$ to $E$}\; 
    \lElse{add $(\rep(S_{u''}),\rep'(\cl(S_{u'})))$ to $E$}\; 
    \lIf{\emph{$\rep(\cl(S_{u''}))$ does not have color $c$}} 
        {add $(\rep(\cl(S_{u''})),\rep(S_{u'}))$ to $E$}\; 
    \lElse{add $(\rep'(\cl(S_{u''})),\rep(S_{u'}))$ to $E$}\; 
} 
\caption{Computing a sparse $(5+\epsilon)$-spanner of 
       $K_{C_1 \ldots C_k}$.\label{alg-improveB}}
\end{algorithm}

\begin{lemma}    \label{lem-improvB}
Let $G$ be the graph computed by Algorithm~\ref{alg-improveB}. 
Let $p$ and $q$ be two points of $S$ with different colors, and let 
$\{S_u, S_v\}$ be the pair in the MWSPD for which $p \in S_u$ and 
$q \in S_v$. Assume that $u$ is a $c$-node for some color $c$. Then 
there is a path in $G$ between $p$ and $\rep(S_u)$ whose length is 
at most $(2 + \epsilon/3) |pq|$. 
\end{lemma}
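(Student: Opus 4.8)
The plan is to mimic the proof of Lemma~\ref{lemma-levels}, reusing all of its structure, but to replace the ``walk up one $c$-node at a time'' path $\Pi$ by a path that takes the long shortcut edges added in Algorithm~\ref{alg-improveB}. As before, let $w=w_0,\ldots,w_k=u$ be the sequence of $c$-nodes on the path in $T$ from the $c$-leaf $w$ containing $p$ up to $u$, let $\ell_i=\dist(S_{x_i},S_{y_i})$ where $\{S_{x_i},S_{y_i}\}$ realizes $\cl(S_{w_i})$, and let $a_i=L_{\max}(\beta(S_{w_i}))$. Inequalities (\ref{paz1}), (\ref{paz2}), (\ref{paz3}), (\ref{paz4}) and (\ref{paz10}) all still hold verbatim, since they depend only on the split-tree and the WSPD, not on which edges we put in $G$. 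In particular $\ell_k\le(1+4/s)|pq|$, and by (\ref{paz4}), going up $1+\mu d$ $c$-nodes at least halves the relevant $\ell$-value. The new edges let us jump $\zeta=2\delta(\mu d+1)$ $c$-nodes at once, so along the shortcut path the $\ell$-values decrease geometrically by a factor of $2^{\delta}$ per hop.

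Concretely, I would build the path $p\to\rep(\cl(S_{w_0}))\to\rep(S_{w_0})$ using lines \ref{line9}--\ref{line10} and \ref{line14}--\ref{line15} as before, and then hop from $\rep(S_{w_0})$ up through a subsequence $w_0=w_{j_0},w_{j_1},\ldots,w_{j_m}=w_k$ of the $c$-nodes, where consecutive indices differ by $\zeta$ (except possibly the last hop), using at each step either the edge $(\rep(S_{w_{j_t}}),\rep(\cl(S_{w_{j_{t+1}}})))$ followed by $(\rep(\cl(S_{w_{j_{t+1}}})),\rep(S_{w_{j_{t+1}}}))$, i.e.\ the two families of edges added in the replacement block of Algorithm~\ref{alg-improveB} (one routing $\rep(S_{w''})$ to $\rep(\cl(S_{w'}))$, the other routing $\rep(\cl(S_{w''}))$ to $\rep(S_{w'})$). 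By Lemma~\ref{lemma-912}, each such edge has length at most $(1+4/s)\,\ell_{j_{t+1}}$ (since both $S_{w_{j_t}}$ and $S_{w_{j_{t+1}}}$ lie in the subtree of $x_{j_{t+1}}$, and $\cl(S_{w_{j_{t+1}}})=S_{y_{j_{t+1}}}$). Applying (\ref{paz4}) $\delta$ times between two consecutive hop points gives $\ell_{j_t}\le(1/2^{\delta})\,\ell_{j_{t+1}}$ whenever the hop has full length $\zeta$, so summing the geometric series the total of the $\ell_{j_t}$ over the hop points is at most $\frac{2^{\delta}}{2^{\delta}-1}\,\ell_k$ — this is exactly where the constant $\delta$ with $\frac{2^{\delta}}{2^{\delta}-1}\le 1+\epsilon/36$ is used. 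The only subtlety is the short initial segment near the $c$-leaf, where there may be fewer than $\zeta$ $c$-nodes before the first hop point: here I would bound those $\ell_i$ by the same non-decreasing-$a_i$ argument used for the case $1\le k\le\mu d$ in Lemma~\ref{lemma-levels}, absorbing them into a constant-times-$\ell_k$ term (it is this initial segment, together with the first two edges $p\to\rep(\cl(S_{w_0}))\to\rep(S_{w_0})$, that eats most of the $2$ in the bound).

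Putting the pieces together, the length of the path is at most a small constant (coming from the initial edges and the short bottom segment) plus $2(1+4/s)\cdot\frac{2^{\delta}}{2^{\delta}-1}\,\ell_k$ for the hops, and then $\ell_k\le(1+4/s)|pq|$ by (\ref{paz10}). Using $(1+4/s)^2\le 1+\epsilon/36$ and $\frac{2^{\delta}}{2^{\delta}-1}\le 1+\epsilon/36$ and $s\ge 12/\epsilon$, I would track the error terms to show everything collapses to at most $(2+\epsilon/3)|pq|$. The main obstacle I anticipate is bookkeeping at the two ``ends'' of the path — precisely identifying which edges of Algorithm~\ref{alg-improveB} route $\rep(S_{w_{j_t}})$ to $\rep(S_{w_{j_{t+1}}})$ (one must pass through $\rep(\cl(\cdot))$ of the right node, and the two symmetric edge families in the algorithm exist exactly to handle the two directions), and handling the leftover $c$-nodes below the first hop point without a clean geometric-decay argument; the interior of the path is a routine geometric series. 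Choosing the hop points so that the *last* (possibly short) hop is at the bottom rather than the top, as I described, keeps the decay estimate (\ref{paz4}) applicable to every full-length hop and confines the messy case to a single bounded-length stretch.
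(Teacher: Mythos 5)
Your setup is right (the inequalities (\ref{paz1})--(\ref{paz4}) and (\ref{paz10}) do carry over unchanged, the shortcut edges do have length at most $(1+4/s)\ell_{j_{t+1}}$ by Lemma~\ref{lemma-912}, and the interior of the path is indeed a routine geometric series), but the endpoint bookkeeping that you flag as the ``main obstacle'' is precisely where your construction fails, and your proposed fix does not stay within the $(2+\epsilon/3)$ budget. In your path you pay two edges at the bottom level ($p\to\rep(\cl(S_{w_0}))\to\rep(S_{w_0})$, together costing up to $2(1+4/s)\ell_0$) \emph{and} two edges at every hop point including the top (costing $2(1+4/s)\ell_{j_t}$ each). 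The top level alone already costs about $2(1+4/s)\ell_k\approx 2|pq|$ -- contrary to your remark, it is the top of the path, not the bottom, that ``eats the $2$'' -- so every other contribution must fit into $O(\epsilon)|pq|$. But $\ell_0$ can be as large as $\dist(S_u,S_v)\approx|pq|$ (take $\cl(S_{w_0})=S_v$), and when $k\le\zeta$ there are no full-length hops to supply geometric decay: your four-edge path $p\to\rep(\cl(S_{w_0}))\to\rep(S_{w_0})\to\rep(\cl(S_{w_k}))\to\rep(S_u)$ then has length about $4|pq|$. Bounding the leftover $\ell_i$'s by ``a constant times $\ell_k$'' via the non-decreasing-$a_i$ argument (which even carries a $\sqrt d$ factor) only makes this worse; any additive $\Omega(\ell_k)$ term beyond the single $2(1+4/s)\frac{2^\delta}{2^\delta-1}\ell_k$ already breaks the bound.

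The paper's proof avoids the double charge with two devices you do not use. First, for $k\le 2\delta(\mu d+1)$ it takes the \emph{two}-edge path $p\to\rep(\cl(S_{w_0}))\to\rep(S_u)$, where the second edge comes from the second shortcut family $(\rep(\cl(S_{u''})),\rep(S_{u'}))$ and is bounded not by Lemma~\ref{lemma-912} applied to a WSPD pair but by the triangle inequality through $p$, using $|p\,\rep(S_u)|\le(2/s)|pq|$ (both points lie in $S_u$); this gives $2(1+4/s)^2|pq|+(2/s)|pq|$. Second, for $k>2\delta(\mu d+1)$ it chooses hop points spaced $\delta(\mu d+1)$ apart (not $\zeta$) with the remainder $m$ folded into the first jump -- this is exactly why $\zeta$ is set to $2\delta(\mu d+1)$ -- and the first jump goes \emph{directly} from $\rep(\cl(S_{w_0}))$ to $\rep(S_{w_{\delta(\mu d+1)+m}})$, again bounded via the detour through $p$ by $(1+4/s)\ell_0+(2/s)|pq|$. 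Thus level $0$ is charged \emph{in place of} the first hop level (using $\ell_0\le\ell_{\delta(\mu d+1)+m}$ it becomes the $j=1$ term of the geometric sum), rather than in addition to it, and the total collapses to $\bigl(2/s+2(1+4/s)^2\tfrac{2^\delta}{2^\delta-1}\bigr)|pq|\le(2+\epsilon/3)|pq|$. Without these two modifications -- the separate small-$k$ case and the merged first jump bounded through $p$ -- your path construction cannot reach the stated constant, so the proposal has a genuine gap.
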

\begin{proof} 
Let $w$ be the $c$-leaf such that $r \in S_w$, and let 
$w=w_0,w_1,\ldots,w_k=u$ be the sequence of $c$-nodes that are on the 
path in $T$ from $w$ to $u$. As in the proof of Lemma~\ref{lemma-levels}, 
we assume without loss of generality that, for all $0 \leq i \leq k$, 
the color of $\rep(\cl(S_{w_i}))$ is not equal to $c$.  

Throughout the proof, we will use the variables $x_i$, $y_i$, $\ell_i$, 
and $a_i$, for $0 \leq i \leq k$, that were introduced in the proof of 
Lemma~\ref{lemma-levels}. 

We first assume that $0 \leq k \leq 2 \delta (\mu d +1)$. Let $\Pi$ 
be the path 
\[ p \rightarrow \rep(\cl(S_w)) \rightarrow \rep(S_u) . 
\] 
It follows from Algorithm~\ref{alg-improveB} that $\Pi$ is a path in 
$G$. Since $p \in S_w = S_{w_0} \subseteq S_{x_0}$ and 
$\rep(\cl(S_w)) = \rep(\cl(S_{w_0})) \in S_{y_0}$, it follows 
from Lemma~\ref{lemma-912} that 
\begin{equation} 
   | p , \rep(\cl(S_w)) | \leq (1+4/s) \cdot \dist(S_{x_0},S_{y_0}) 
          = (1+4/s) \ell_0 . 
     \label{paz11} 
\end{equation} 
Since $\{S_u,S_v\}$ is one of the pairs that is considered in the 
definition of $\cl(S_{w_0})$, we have 
$\dist(S_{x_0},S_{y_0}) \leq \dist(S_u,S_v)$. Again by 
Lemma~\ref{lemma-912}, we have $\dist(S_u,S_v) \leq (1+4/s)|pq|$. 
Thus, we have shown that 
\[ | p , \rep(\cl(S_w)) | \leq (1+4/s)^2 |pq| .
\] 
By the triangle inequality, we have 
\[ | \rep(\cl(S_w)) , \rep(S_u) | \leq 
   | \rep(\cl(S_w)) , p | + | p , \rep(S_u) | .
\] 
Since $p$ and $\rep(S_u)$ are both contained in $S_u$, it follows 
from Lemma~\ref{lemma-912} that $| p , \rep(S_u) | \leq (2/s)|pq|$. 
Thus, we have 
\[ | \rep(\cl(S_w)) , \rep(S_u) | \leq (1+4/s)^2 |pq| + (2/s)|pq| .
\] 
We have shown that the length of the path $\Pi$ is at most 
\[ \left( 2 (1+4/s)^2 + 2/s \right) |pq| , 
\] 
which is at most $( 2 + \epsilon/3 ) |pq|$ by our choice of $s$ in 
Algorithm~\ref{alg-improveB}. 

In the rest of the proof, we assume that $k > 2 \delta (\mu d +1)$.  
We define 
\[ m = k \bmod ( \delta ( \mu d + 1 ) ) 
\]
and 
\[ m' = \frac{k-m}{\delta (\mu d  + 1)} .
\] 
We consider the sequence of $c$-nodes 
\[ w=w_0 , w_{\delta(\mu d+1)+m} , w_{2\delta(\mu d +1)+m} ,  
      w_{3\delta(\mu d +1)+m} , \ldots , w_k=u , 
\] 
and define $\Pi$ to be the path 
\[  
\begin{array}{ccccccc} 
  p & \rightarrow & \rep(\cl(S_{w_0})) & \rightarrow & 
            \rep(S_{w_{\delta(\mu d +1)+m}}) & & \\
    & \rightarrow & \rep(\cl(S_{w_{2\delta(\mu d +1)+m }})) & 
            \rightarrow & \rep(S_{w_{2\delta(\mu d+1) +m}}) & & \\ 
    & \rightarrow & \rep(\cl(S_{w_{3\delta(\mu d +1)+m }})) & 
            \rightarrow & \rep(S_{w_{3\delta(\mu d+1) +m}}) & & \\ 
    & \vdots & & \vdots & & & \\ 
    & \rightarrow & \rep(\cl(S_{w_k})) & \rightarrow & 
            \rep(S_{w_k}) & = & \rep(S_u) . 
\end{array} 
\] 
It follows from Algorithm~\ref{alg-improveB} that $\Pi$ is a path in 
$G$. We will show that the length of this path is at most 
$(2 + \epsilon/3) |pq|$. 

We have shown already (see (\ref{paz11})) that the length of the first 
edge on $\Pi$ satisfies 
\[ | p , \rep(\cl(S_{w_0})) | \leq (1+4/s) \ell_0 . 
\] 
The length of the second edge satisfies 
\begin{eqnarray*} 
  | \rep(\cl(S_{w_0})) , \rep(S_{w_{\delta(\mu d +1)+m}}) | 
   & \leq &  
     |\rep(\cl(S_{w_0})) , p | + 
     | p , \rep(S_{w_{\delta(\mu d +1)+m}}) |  \\ 
   & \leq & 
     (1+4/s) \ell_0 + | p , \rep(S_{w_{\delta(\mu d +1)+m}}) | . 
\end{eqnarray*} 
Since $p$ and $\rep(S_{w_{\delta(\mu d +1)+m}})$ are both contained 
in $S_u$, it follows from Lemma~\ref{lemma-912} that 
\[ | p , \rep(S_{w_{\delta(\mu d +1)+m}}) | \leq (2/s) |pq| .
\] 
Thus, the length of the second edge on $\Pi$ satisfies 
\[ | \rep(\cl(S_{w_0})) , \rep(S_{w_{\delta(\mu d +1)+m}}) | 
    \leq (1+4/s) \ell_0 + (2/s) |pq| .
\] 
Let $2 \leq j \leq m'$. We have seen in (\ref{paz23}) in the proof of 
Lemma~\ref{lemma-levels} that the length of the edge 
\[ ( \rep(\cl(S_{w_{j \delta(\mu d +1)+m}})) , 
     \rep(S_{w_{j \delta(\mu d +1)+m}}) )
\]
satisfies 
\[ | \rep(\cl(S_{w_{j \delta(\mu d +1)+m}})) , 
     \rep(S_{w_{j \delta(\mu d +1)+m}}) | \leq 
     (1+4/s) \ell_{j \delta(\mu d +1)+m} . 
\] 
Again, let $2 \leq j \leq m'$. Since 
\[ \rep( S_{w_{(j-1) \delta(\mu d+1)+m}}) \in 
    S_{w_{j \delta(\mu d+1)+m}} \subseteq S_{x_{j \delta(\mu d+1)+m}} 
\]
and 
\[ \rep(\cl(S_{w_{j \delta(\mu d +1)+m}})) \in 
    S_{y_{j \delta(\mu d +1)+m}} 
\] 
it follows from Lemma~\ref{lemma-912} that the length of the edge 
\[ ( \rep( S_{w_{(j-1) \delta(\mu d+1)+m}}) , 
     \rep(\cl(S_{w_{j \delta(\mu d +1)+m}})) ) 
\] 
satisfies 
\[ | \rep( S_{w_{(j-1) \delta(\mu d+1)+m}}) , 
     \rep(\cl(S_{w_{j \delta(\mu d +1)+m}})) | \leq 
     (1+4/s) \ell_{j \delta(\mu d +1)+m} . 
\] 
We have shown that the length of $\Pi$ is at most     
\[ (2/s) |pq| + 
   2 ( 1 + 4/s ) 
   \left( \ell_0 + \sum_{j=2}^{m'} \ell_{j \delta(\mu d +1) +m} 
   \right) . 
\] 
The definition of $\ell_0,\ell_1,\ldots,\ell_k$ implies that this 
sequence is non-decreasing. Thus, 
$\ell_0 \leq \ell_{\delta(\mu d +1) +m}$ and it follows that the 
length of $\Pi$ is at most 
\[ (2/s) |pq| + 2(1+4/s) \sum_{j=1}^{m'} \ell_{j \delta(\mu d +1) +m} . 
\] 
Recall inequality (\ref{paz4}) in the proof of Lemma~\ref{lemma-levels}, 
which states that 
\[ \ell_i \leq \frac{1}{2} \ell_{i + \mu d + 1} . 
\] 
By applying this inequality $\delta$ times, we obtain 
\[ \ell_i \leq \left( \frac{1}{2} \right)^{\delta} 
                  \ell_{i + \delta(\mu d+1)} . 
\] 
For $i = j \delta(\mu d+1)+m$, this becomes  
\[ \ell_{j \delta(\mu d+1)+m} \leq \left( \frac{1}{2} \right)^{\delta} 
                  \ell_{(j+1) \delta(\mu d+1)+m} . 
\] 
By repeatedly applying this inequality, we obtain, for $h \geq j$,  
\[ \ell_{j \delta(\mu d+1)+m} \leq 
         \left( \frac{1}{2} \right)^{(h-j)\delta} 
                  \ell_{h \delta(\mu d+1)+m} . 
\] 
For $h = m'$, the latter inequality becomes 
\[ \ell_{j \delta(\mu d+1)+m} \leq 
         \left( \frac{1}{2} \right)^{(m'-j)\delta} \ell_k . 
\] 
It follows that  
\begin{eqnarray*} 
  \sum_{j=1}^{m'} \ell_{j \delta(\mu d +1) +m} & \leq & 
  \sum_{j=1}^{m'} \left( \frac{1}{2} \right)^{(m'-j)\delta} \ell_k \\ 
   & = & \sum_{i=0}^{m'-1} \left( \frac{1}{2} \right)^{i\delta} \ell_k  
              \\  
   & \leq & \sum_{i=0}^{\infty} 
                \left( \frac{1}{2^{\delta}} \right)^i \ell_k  \\ 
   & = & \frac{2^{\delta}}{2^{\delta}-1} \ell_k . 
\end{eqnarray*} 
According to (\ref{paz10}) in the proof of Lemma~\ref{lemma-levels}, 
we have 
\[ \ell_k \leq (1+4/s) |pq| . 
\] 
We have shown that the length of the path $\Pi$ is at most 
\[ \left( 2/s + 2 (1+4/s)^2 \, \frac{2^{\delta}}{2^{\delta}-1} 
   \right) |pq| . 
\] 
Our choices of $s$ and $\delta$ (see Algorithm~\ref{alg-improveB}) 
imply that $2/s \leq \epsilon/6$, $(1+4/s)^2 \leq 1 + \epsilon/36$ and  
$\frac{2^{\delta}}{2^{\delta}-1} \leq 1 + \epsilon/36$. 
Therefore, the length of $\Pi$ is at most 
\[ \left( \epsilon/6 + 2 ( 1 + \epsilon/36)^2 \right) |pq|  
     \leq ( 2 + \epsilon/3 ) |pq| , 
\]
where the latter inequality follows from our assumption that 
$0 < \epsilon < 1$. This completes the proof.  
\end{proof}

\begin{lemma} 
\label{paz7} 
Let $n = |S|$. The graph $G$ computed by Algorithm~\ref{alg-improveB} 
is a $(5+\epsilon)$-spanner of the complete $k$-partite graph 
$K_{C_1 \ldots C_k}$ and the number of edges of this graph is $O(n)$. 
The running time of Algorithm~\ref{alg-improveB} is $O(n \log n)$. 
\end{lemma}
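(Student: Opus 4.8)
The statement has three parts: the stretch factor is $5+\epsilon$, the edge count is $O(n)$, and the running time is $O(n\log n)$. I would dispatch the last two quickly by reusing the earlier analysis, and then concentrate on the stretch bound, which is where the new parameters $s$ and $\delta$ of Algorithm~\ref{alg-improveB} are engineered to pay off.

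\textbf{Edge count and running time.} The only change from Algorithm~\ref{alg-biSpan} is that lines~\ref{line20}--\ref{line23} are replaced by a loop over $\zeta$-level $c$-children of each $c$-node $u'$, where $\zeta = 2\delta(\mu d+1)$ is a \emph{constant} (depending only on $d$ and $\epsilon$). For each $c$-node $u'$ we now add at most $O(\zeta)$ edges instead of $O(1)$, so the contribution of these lines is $O(\zeta)$ times the number of nodes in the split-tree, i.e.\ still $O(n)$; all other edge-adding lines are unchanged and were already bounded by $O(n)$ in the proof of Lemma~\ref{thm-numberOfEdges}. For the running time, the split-tree and WSPD cost $O(n\log n)$ as before (Callahan--Kosaraju~\cite{callahan95}), and the rest of the work is proportional to the size of $T$, the number of WSPD pairs, and the number of edges output, hence $O(n)$; so the total is $O(n\log n)$, exactly as in Lemma~\ref{lemtime}.

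\textbf{Stretch factor.} The heart of the matter is to redo the induction of Lemma~\ref{thm-pathS} using the improved ``local'' bound of Lemma~\ref{lem-improvB} in place of Lemma~\ref{lemma-levels}. The structure is identical: induct on the rank of the edge length $|pq|$ in $K$, let $\{S_u,S_v\}$ be the MWSPD pair with $p\in S_u$, $q\in S_v$. In the base case and in Case~1 ($u$ is a $c$-node, $v$ a $c'$-node, $c\neq c'$), the edge $(\rep(S_u),\rep(S_v))$ is in $G$ with length $\le(1+4/s)|pq|$ by Lemma~\ref{lemma-912}, and two applications of Lemma~\ref{lem-improvB} give a path of length at most $(2+\epsilon/3)|pq| + (1+4/s)|pq| + (2+\epsilon/3)|pq| = (5 + 2\epsilon/3 + 4/s)|pq| \le (5+\epsilon)|pq|$, using $s\ge 12/\epsilon$. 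In Cases~2 and~3 (one endpoint's node multichromatic) we route to $\rep(S_v)$ (resp.\ $\rep(S_u)$) via at most two recursive calls on strictly shorter edges inside $S_v$ — exactly as in Lemma~\ref{thm-pathS}, this costs at most $(4t/s)|pq|$ with $t=5+\epsilon$ — plus one application of Lemma~\ref{lem-improvB} on the $c$-node side, for a total of at most $(2+\epsilon/3) + (1+4/s) + 4t/s$ times $|pq|$; with $s\ge 12/\epsilon$ this is $\le 5+\epsilon$. Case~4 (both multichromatic) uses the line~\ref{line28}--\ref{line29} edge and two such $(4t/s)|pq|$ detours, giving at most $(1+4/s) + 8t/s$ times $|pq|$, again $\le 5+\epsilon$ for $s$ large enough relative to $1/\epsilon$.

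\textbf{Where the real work is.} The genuinely new content is Lemma~\ref{lem-improvB}, and it has already been proved in the excerpt; the proof of Lemma~\ref{paz7} itself is essentially bookkeeping — verifying that the constant $(2+\epsilon/3)$ coming from Lemma~\ref{lem-improvB}, combined with the $(1+4/s)$ edge and the $O(t/s)$ recursive detours, closes up to exactly $5+\epsilon$ under the stated choices $s\ge 12/\epsilon$ and $(1+4/s)^2\le 1+\epsilon/36$. So the main obstacle is not conceptual but arithmetic: one must check that the various $\epsilon/6$, $\epsilon/36$, and $4t/s$ slack terms sum to at most $\epsilon$ simultaneously in all four cases and in the base case, for the \emph{same} value of $s$. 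I would handle this by first fixing $t=5+\epsilon$ as the target, writing each case's bound as $5 + (\text{error}(s,\epsilon))$, and then checking that $s\ge 12/\epsilon$ (possibly together with a mild strengthening absorbed into the ``$s$ sufficiently large'' phrasing already used in Lemma~\ref{thm-pathS}) forces $\text{error}(s,\epsilon)\le\epsilon$ in each case; the dominant error term is the $8t/s = 8(5+\epsilon)/s$ of Case~4, which is $\le \epsilon$ precisely when $s\ge 8(5+\epsilon)/\epsilon$, comfortably implied by the constant chosen in Algorithm~\ref{alg-improveB} after a trivial adjustment of the threshold.
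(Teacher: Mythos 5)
Your proposal is correct and follows essentially the same route as the paper: reuse the edge-count and running-time arguments of Lemmas~\ref{thm-numberOfEdges} and~\ref{lemtime} (with the per-$c$-node factor $2\delta(\mu d+1)$ absorbed into the $O(n)$ bound), and rerun the induction of Lemma~\ref{thm-pathS} with the improved value $t'=2+\epsilon/3$ from Lemma~\ref{lem-improvB}, checking the base case/Case~1, Cases~2--3, and Case~4 against the target $5+\epsilon$. The only minor quibble is your closing suggestion that Case~4 might require strengthening the threshold on $s$: there the fixed term is $1+4/s$ rather than $5$, so the available slack is $4+\epsilon$ rather than $\epsilon$, and the paper's choice $s\ge 12/\epsilon$ (together with $0<\epsilon<1$) already closes all cases without any adjustment.
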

\begin{proof} 
The proof for the upper bound on the stretch factor is similar to the 
one of Lemma~\ref{thm-pathS}. The difference is that instead of the 
value $t'$ that was used in the proof of Lemma~\ref{thm-pathS}, we now 
use the value $t'=2+\epsilon/3$ of Lemma~\ref{lem-improvB}. Thus, the 
stretch factor for the base case of the induction and for Case~1 is  
at most 
\[ (1 + 4/s) + 2 t' = 5 + 4/s + 2\epsilon/3 , 
\] 
which is at most $5+\epsilon$, because of our choice for $s$ in 
Algorithm~\ref{alg-improveB}. For Cases~2 and~3, the stretch factor is 
at most (see (\ref{paz6}) in the proof of Lemma~\ref{thm-pathS}, where 
$t = 5 + \epsilon$) 
\[ t' + (1 + 4/s) + 4t/s = 3 + \epsilon/3 + (4/s)(6+\epsilon) , 
\] 
which is at most $5+\epsilon$, again because of our choice for $s$. 
Finally, the stretch factor for Case~4 is at most (see (\ref{paz666}) 
in the proof of Lemma~\ref{thm-pathS}, where $t = 5 + \epsilon$) 
\[ (1+4/s) + 8t/s = 1 + (4/s) (11 + 2 \epsilon) , 
\] 
which is at most $5+\epsilon$, because of our choice for $s$. 

The analysis for the number of edges is the same as in 
Lemma~\ref{thm-numberOfEdges}, except that the number of edges that 
are added to each $c$-node in the modified for-loop is 
$2 \delta (\mu d +1)$ instead of one as is in 
Algorithm~\ref{alg-biSpan}. Finally, the analysis of the running time 
is the same as in Lemma~\ref{lemtime}. 
\end{proof}

We have proved the following result. 

\begin{theorem} 
Let $k \geq 2$ be an integer, let $S$ be a set of $n$ points in 
$\mathbb{R}^d$ which is partitioned into $k$ subsets 
$C_1,C_2,\ldots,C_k$, and let $0 < \epsilon < 1$ be a real constant. 
In $O(n \log n)$ time, we can compute a $(5+\epsilon)$-spanner of the 
complete $k$-partite graph $K_{C_1 \ldots C_k}$ having $O(n)$ edges. 
\end{theorem}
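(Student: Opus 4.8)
The plan is to take Algorithm~\ref{alg-improveB} as the promised construction and to verify its three advertised properties — the $O(n\log n)$ running time, the $O(n)$ edge bound, and the $(5+\epsilon)$ stretch factor — by assembling the lemmas already established; this is precisely the content of Lemma~\ref{paz7}, so the theorem will follow at once once those pieces are in place. Concretely, on input $S$, the partition $C_1,\dots,C_k$, and $\epsilon\in(0,1)$, the algorithm first fixes a separation constant $s\ge 12/\epsilon$ with $(1+4/s)^2\le 1+\epsilon/36$ and an integer $\delta$ with $2^{\delta}/(2^{\delta}-1)\le 1+\epsilon/36$, then builds the split-tree and the WSPD of Callahan and Kosaraju with separation constant $s$, extracts the MWSPD, and adds edges exactly as in Algorithm~\ref{alg-biSpan} except that the final inner loop links each $c$-node to its $\zeta$-level $c$-ancestors, where $\zeta=2\delta(\mu d+1)$.

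For the running time I would invoke the argument of Lemma~\ref{lemtime} verbatim: the split-tree and WSPD are computed in $O(n\log n)$ time, all the representatives $\rep(S_u)$, $\rep'(S_u)$ and the sets $\cl(S_u)$ are obtained by two traversals of the tree, and the remaining work is proportional to the size of the tree plus the number of WSPD pairs plus the number of output edges, each of which is $O(n)$ — the change from one edge to $2\delta(\mu d+1)=O(1)$ edges per $c$-node does not affect this. For the edge count I would reuse the proof of Lemma~\ref{thm-numberOfEdges}: the star edges (lines~\ref{line9}--\ref{line10}) total $O(n)$ since the point sets of the $c$-leaves, over all colors, are pairwise disjoint; the edges chosen per MWSPD pair total $O(n)$ because the WSPD has $O(n)$ pairs; and the edges attached to $c$-nodes are now $O(1)$ per node instead of one per node, still $O(n)$ summed over all nodes and all colors.

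The substantive part, and the step I expect to be the real obstacle, is the stretch factor; here I would run the induction of Lemma~\ref{thm-pathS} with the improved per-leaf bound of Lemma~\ref{lem-improvB} in place of $t'$. Ordering the edges $(p,q)$ of $K$ by length, let $\{S_u,S_v\}$ be the MWSPD pair with $p\in S_u$, $q\in S_v$. In the base case and in Case~1 ($u$ a $c$-node, $v$ a $c'$-node, $c\ne c'$), the selected edge between $\rep(S_u)$ and $\rep(S_v)$ has length at most $(1+4/s)|pq|$ by Lemma~\ref{lemma-912}, and Lemma~\ref{lem-improvB} supplies paths of length at most $(2+\epsilon/3)|pq|$ from $p$ to $\rep(S_u)$ and from $q$ to $\rep(S_v)$, giving a total of at most $(1+4/s)+2(2+\epsilon/3)=5+4/s+2\epsilon/3\le 5+\epsilon$ by the choice of $s$. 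Cases~2 and~3 (one endpoint in a $c$-node, the other in a multichromatic node) and Case~4 (both multichromatic) are handled exactly as in Lemma~\ref{thm-pathS}: for a multichromatic endpoint one reaches its representative by two inductive steps through a differently-colored point of the set, each of distance $<|pq|$ and hence contributing at most $(2t/s)|pq|$ by Lemma~\ref{lemma-912}, and substituting $t'=2+\epsilon/3$ and $t=5+\epsilon$ into the bounds (\ref{paz6}) and (\ref{paz666}) and taking $s$ as chosen keeps the total at most $5+\epsilon$ in every case. The only genuinely delicate ingredient is Lemma~\ref{lem-improvB} itself, whose proof must control $\sum_i\ell_i$ along the chain of $c$-ancestors using the geometric decay of (\ref{paz4}) amplified $\delta$-fold by the shortcuts Algorithm~\ref{alg-improveB} adds; but that lemma is already in hand, so combining Lemmas~\ref{lem-improvB}, \ref{lemtime}, and the edge count (equivalently, Lemma~\ref{paz7}) completes the proof.
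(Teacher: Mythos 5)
Your proposal is correct and follows essentially the same route as the paper: the theorem is exactly the content of Lemma~\ref{paz7}, and you assemble it just as the paper does, by running Algorithm~\ref{alg-improveB}, substituting the improved bound $t'=2+\epsilon/3$ of Lemma~\ref{lem-improvB} into the case analysis of Lemma~\ref{thm-pathS} (using the bounds (\ref{paz6}) and (\ref{paz666}) with $t=5+\epsilon$ and the stated choice of $s$ and $\delta$), and reusing the edge-count and running-time arguments of Lemmas~\ref{thm-numberOfEdges} and~\ref{lemtime} with the only change being $O(1)$ shortcut edges per $c$-node.
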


\section{Improving the Stretch Factor} 
\label{section-tightBounds} 
We have shown how to compute a $(5+\epsilon)$-spanner with $O(n)$ 
edges of any complete $k$-partite graph. In this section, we show that 
if we are willing to use $O(n \log n)$ edges, the stretch factor can 
be reduced to $3+\epsilon$. We start by showing that a stretch factor 
less than $3$, while using $O(n \log n)$ edges, is not possible.    

\begin{theorem} 
Let $c>0$ be a constant and let $n$ and $k$ be positive integers with 
$2 \leq k \leq n - 2c \sqrt{n \log n}$. For every real number 
$0 < \epsilon < 1$, there exists a complete $k$-partite geometric 
graph $K$ with $n$ vertices such that the following is true: 
If $G$ is any subgraph of $K$ with at most $c^2 n \log n$ edges,
then the stretch factor of $G$ is at least $3 - \epsilon$.   
\end{theorem}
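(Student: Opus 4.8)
The plan is to build a complete $k$-partite graph $K$ that has two ``dense'' colour classes lying at unit distance from each other, and to exploit the fact that monochromatic edges are forbidden: every edge between these two classes turns out to be indispensable for any subgraph of stretch below $3-\epsilon$, and there are more than $c^2 n\log n$ such edges.

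I would construct $K$ as follows. Fix a small constant $\rho=\rho(\epsilon)>0$ (taking $\rho\le \epsilon/12$ will do) and set $m=\lfloor (n-k+2)/2\rfloor$. Let $C_1$ consist of $\lceil (n-k+2)/2\rceil$ points inside the ball of radius $\rho$ about the origin, let $C_2$ consist of $m$ points inside the ball of radius $\rho$ about $(1,0)$, and let each of the remaining $k-2$ colour classes be a single point, all of these ``far'' points placed at mutual distance at least $100$ and at distance at least $100$ from $C_1\cup C_2$ (for instance on the $x$-axis). This is a complete $k$-partite geometric graph on $n$ points in $\mathbb{R}^2$; for $k=2$ there are simply no far points, and if $n-k$ is odd the spare point is absorbed into $C_1$. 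Using $k\le n-2c\sqrt{n\log n}$ together with the integrality of $k$, a short computation gives $m>c\sqrt{n\log n}$, so the number $|C_1|\cdot|C_2|\ge m^2$ of edges of $K$ between $C_1$ and $C_2$ is \emph{strictly} larger than $c^2 n\log n$.

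The core of the proof is the claim that if $G$ is a subgraph of $K$ whose stretch factor is at most $3-\epsilon$, then $G$ must contain every edge joining $C_1$ to $C_2$. To prove it, fix $p\in C_1$ and $q\in C_2$. Since $p$ and $q$ have different colours, $(p,q)$ is an edge of $K$, so $\delta_K(p,q)=|pq|\in[1-2\rho,\,1+2\rho]$, and $G$ must contain a $p$--$q$ path of length at most $(3-\epsilon)(1+2\rho)$. Suppose $(p,q)\notin G$ and consider any other $p$--$q$ path in $G$. If it visits a far point, it uses two edges incident to far points, each much longer than $3$, so it is far too long. Otherwise it stays inside $C_1\cup C_2$; because $K$ has no edge between two points of $C_1$, and none between two points of $C_2$, the colours along the path alternate, so a $p$--$q$ path (with $p$ of colour $1$ and $q$ of colour $2$) has an odd number of edges, and, not being the missing single edge, it has at least three edges, each of length at least $1-2\rho$ (the separation of the two balls); its length is therefore at least $3(1-2\rho)$. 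Either way $\delta_G(p,q)\ge 3(1-2\rho)$, so the stretch factor of $G$ is at least $3(1-2\rho)/(1+2\rho)\ge 3-\epsilon$ by the choice of $\rho$, a contradiction. Combining this with the edge count: if $G$ has at most $c^2 n\log n$ edges then, since $K$ has more than $c^2 n\log n$ edges between $C_1$ and $C_2$, some such edge is missing, and by the contrapositive of the claim the stretch factor of $G$ is at least $3-\epsilon$.

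The points I expect to need the most care are, first, the detour lower bound: one must be sure that forbidding monochromatic edges genuinely forces an alternating, hence at least $3$-hop, path inside $C_1\cup C_2$, and that introducing the far singleton points (required in order to realise all $k$ colours) does not create any cheap shortcut between $C_1$ and $C_2$. Second, the bookkeeping with the floors and ceilings must be done so that $|C_1|\cdot|C_2|$ ends up strictly larger than $c^2 n\log n$ rather than merely equal to it, which is where the precise hypothesis $k\le n-2c\sqrt{n\log n}$ (with its factor $2$) is used.
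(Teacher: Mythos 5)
Your proposal is correct and follows essentially the same route as the paper: two dense colour classes in small balls at (nearly) unit distance forcing more than $c^2 n\log n$ bichromatic edges, the remaining $k-2$ colours as far-away singletons, and the observation that a missing edge between the two dense classes forces a detour of length at least about $3$ (since paths must alternate colours or pass through a far point), giving stretch at least $3-\epsilon$. The only differences are cosmetic: you push the singleton classes far away (distance $100$) where the paper places them in a third nearby disk, and you handle the floor/ceiling bookkeeping a bit more explicitly.
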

\begin{proof} 
Let $D_1$, $D_2$, and $D_3$ be three disks of radius $\epsilon/12$ 
centered at the points $(0,0)$, $(1+\epsilon/6,0)$, and 
$(2+\epsilon/3,0)$, respectively. We place $(n-k+1)/2$ red 
points inside $D_1$ and $(n-k+1)/2$ blue points inside $D_2$. 
The remaining $k-2$ points are placed inside $D_3$ and each 
of these points has a distinct color (which is not red or blue). 
Let $K$ be the complete $k$-partite geometric graph defined by 
these $n$ points. We claim that $K$ satisfies the claim in 
the theorem. 

Let $G$ be an arbitrary subgraph of $K$ and assume that $G$ contains 
at most $c^2 n \log n$ edges. We will show that the stretch factor of 
$G$ is at least $3 - \epsilon$.   

Assume that $G$ contains all red-blue edges. Then the number of edges 
in $G$ is at least $( (n-k+2) / 2 )^2$. Since 
$k \leq n - 2c \sqrt{n \log n}$, this quantity is larger 
than $c^2 n \log n$. Thus, there is a red point $r$ and a blue point 
$b$, such that $(r,b)$ is not an edge in $G$. The length of a shortest 
path in $G$ between $r$ and $b$ is at least $3$. Since 
$|rb| \leq 1 + \epsilon / 3$, it follows that the stretch factor of $G$ 
is at least $\frac{3}{1+ \epsilon/3}$, which is at least $3-\epsilon$. 
\end{proof}

\begin{theorem}
Let $k \geq 2$ be an integer, let $S$ be a set of $n$ points in 
$\mathbb{R}^d$ which is partitioned into $k$ subsets 
$C_1,C_2,\ldots,C_k$, and let $0 < \epsilon < 1$ be a real constant. 
In $O(n \log n)$ time, we can compute a $(3+\epsilon)$-spanner of the 
complete $k$-partite graph $K_{C_1 \ldots C_k}$ having $O(n \log n)$ 
edges. 
\end{theorem}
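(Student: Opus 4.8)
The plan is to modify Algorithm~\ref{alg-improveB} so that the path produced in the ``monochromatic--monochromatic'' situation (the base case and Case~1 in the proof of Lemma~\ref{thm-pathS}) has length at most $(3+\epsilon)|pq|$ instead of $(5+\epsilon)|pq|$; by the lower bound just proved, $3$ is optimal for this edge budget, so no further improvement is possible. Recall that in that situation $\{S_u,S_v\}$ is the MWSPD pair with $p\in S_u$, $q\in S_v$, both $u,v$ are $c$-/$c'$-nodes, and the algorithm routes $p\to\rep(S_u)\to\rep(S_v)\to q$, of length $2t'+(1+4/s)$ times $|pq|$ with $t'=2+\epsilon/3$ the bound of Lemma~\ref{lem-improvB}. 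In Cases~2--4 (where some node is multichromatic), the edge $\rep(S_u)\rep(S_v)$ together with the $O(t/s)|pq|$ correction terms already yield $<(3+\epsilon)|pq|$ once $s$ is a sufficiently large constant in terms of $\epsilon$ and $d$; so everything comes down to replacing the contribution $2t'$ by $2\cdot(1+O(\epsilon))$.

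The factor $t'\approx 2$ in Lemma~\ref{lem-improvB} is an ``out and back'' phenomenon: to leave the tiny monochromatic cluster $S_u$, a point $p\in S_u$ must hop to a vertex of another colour, and the closest such vertex --- essentially $\rep(\cl(S_{w_i}))$ for the relevant ancestor $c$-node $w_i$ of $p$'s $c$-leaf --- can itself lie at distance $\approx|pq|$ from $p$, after which one pays $\approx|pq|$ more to return to $\rep(S_u)$. The fix has two parts. \emph{Part (i):} add, for each colour $c$, each $c$-leaf $w$, each $c$-node $u$ lying at most $\zeta=2\delta(\mu d+1)$ $c$-levels above $w$, each MWSPD pair $\{S_u,S_v\}$, and each $p\in S_w$, a direct edge from $p$ to the representative of $S_v$ whose colour is $\ne c$ (which exists since $S_u\cup S_v$ is not monochromatic). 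When the chain of $c$-nodes from $w$ up to $u$ is short, we then route $p\to(\text{that representative})\to q$: the first edge has length $\le(1+4/s)|pq|$ by Lemma~\ref{lemma-912}, and the sub-path inside $S_v$ from that representative to $q$ has length at most $(2+\epsilon/3)|pq|$ by Lemma~\ref{lem-improvB} applied on the $v$-side (or $O(t/s)|pq|$ if $v$ is multichromatic), for a total of at most $(3+\epsilon)|pq|$.

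\emph{Part (ii):} enlarge the shortcut structure along each chain of $c$-nodes --- in addition to the $\zeta$-level $c$-children already linked in Algorithm~\ref{alg-improveB}, link each $c$-node's representative and $\cl$-representative to $O(\log n)$ ancestor $c$-nodes chosen so that the consecutive scales $\ell_i$ roughly double. When the chain is long, the smallest scale $\ell_0$ is at most $2^{-\Omega(\delta)}|pq|$, so the first hop off $S_u$ is negligible; the climb then telescopes (just as $\sum_i\ell_i\le\frac{2^{\delta}}{2^{\delta}-1}(1+4/s)|pq|$ in Lemma~\ref{lem-improvB}) and leaves only one non-negligible edge --- the ``return'' edge $\rep(\cl(S_u))\to\rep(S_u)$ of length $\le(1+4/s)\ell_k\le(1+4/s)^2|pq|$ --- so the whole climb costs $(1+O(\epsilon))|pq|$. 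Either way the effective $t'$ is $1+O(\epsilon)$, and re-running the induction of Lemma~\ref{thm-pathS}/Lemma~\ref{paz7} verbatim with this $t'$ and $s,\delta$ chosen large gives a $(3+\epsilon)$-spanner. The $O(n\log n)$ running time follows exactly as in Lemma~\ref{lemtime}: the split tree, WSPD and MWSPD cost $O(n\log n)$, all representatives and $\cl$-sets cost $O(n)$, and the rest is linear in the output size plus the number of WSPD pairs.

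The step I expect to be the real work is the edge count. Attaching $|S_u|$ edges for \emph{every} MWSPD pair with $S_u$ monochromatic would be ruinous, since $\sum_{\{A,B\}\in\mathrm{WSPD}}(|A|+|B|)$ can be $\Theta(n^2)$; the point is that Part~(i) attaches such edges only when $u$ is within $\zeta=O(1)$ $c$-levels of the leaf $w$, and since each split-tree node lies in only $O(s^d)=O(1)$ WSPD pairs and $\sum_w|S_w|=n$ over $c$-leaves, Part~(i) contributes only $O(n)$ edges, while Part~(ii) contributes $O(\log n)$ per $c$-node, hence $O(n\log n)$ overall. Making the short-chain/long-chain dichotomy precise --- in particular, checking that in the long-chain case the pre-existing Algorithm~\ref{alg-improveB} edges together with the $O(\log n)$ new geometric shortcuts really do yield a path of length $(1+O(\epsilon))|pq|$ from $p$ to $\rep(S_u)$, uniformly in where the node $u$ sits along the chain --- together with bookkeeping the $\epsilon/s$ terms, is the technical heart; the remainder parallels Sections~\ref{section-first-analysis}--\ref{section-improved-algo} closely.
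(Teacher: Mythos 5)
Your high-level plan (bypass $\rep(S_u)$ so that Lemma~\ref{lem-improvB} is paid only once per pair) is sound, but the proof breaks exactly at the step you defer to the end: the edge count of Part~(i). Your bound rests on the claim that each split-tree node lies in only $O(s^d)=O(1)$ WSPD pairs. That is not what Callahan and Kosaraju prove (they bound the \emph{total} number of pairs by $O(s^dn)$), and it is false: a single node can lie in $\Theta(n)$ pairs, and it can do so while being a $c$-leaf containing $\Theta(n)$ points. For a concrete one-dimensional example, let $X$ be a cluster of $n/2$ red points inside an interval of length $4^{-n}$ placed just to the left of the origin, let $Z$ and $W$ be blue clusters of $n/8$ points each near $-L$ and $+L$ with $L=4^{n}$, and let $b_j$ be a blue point at coordinate $4^j$ for $j=1,\ldots,n/4$. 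The first split (at the midpoint of $[-L,L]$) puts $X\cup Z$ on one side and all the $b_j$'s and $W$ on the other, and the next split of that side separates $Z$ from $X$; hence the node $u_X$ with $S_{u_X}=X$ has a parent whose box has size $\Theta(L)$. Running the Callahan--Kosaraju recursion, $u_X$ is held intact (its box is tiny) while the other side is split, and one checks that it is paired with every singleton $\{b_j\}$: each $\{b_j\}$ is well separated from the tiny cluster $X$, while every node properly containing $X$ or properly containing several $b_j$'s is too large to be well separated. Thus $u_X$ is a $c$-node that is simultaneously a $c$-root and a $c$-leaf, no proper descendant of $u_X$ appears in the MWSPD, $|S_{u_X}|=n/2$, and $u_X$ lies in $\Theta(n)$ MWSPD pairs. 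Part~(i), which adds an edge from every $p\in S_w$ to a representative of the far side of \emph{every} MWSPD pair at the $\le\zeta$-level ancestors of the $c$-leaf $w$, then inserts $\Theta(n^2)$ edges. So the construction does not have $O(n\log n)$ edges, and no bound of the form ``$O(1)$ pairs per node'' is available to repair the count.

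Two further remarks. First, the long-chain half of your argument (Part~(ii)) is also only asserted: the telescoping path of Lemma~\ref{lem-improvB} ends with \emph{two} edges of length about $\ell_k$ (into $\rep(\cl(S_u))$ and back to $\rep(S_u)$), and $\ell_k$ can be $\Theta(|pq|)$ no matter how long the chain is, so ``only one non-negligible edge remains'' is not a property of that path; a correct treatment would shortcut from the scale $\ell_{k-\zeta}\le 2^{-2\delta}\ell_k$ directly to $\rep(S_u)$ (in fact the edge $(\rep(\cl(S_{u''})),\rep(S_{u'}))$ that Algorithm~\ref{alg-improveB} already adds for $\zeta$-level $c$-children does exactly this), but you neither give this argument nor specify which $O(\log n)$ ancestor shortcuts you add. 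Second, the paper's own proof avoids all of this: it replaces the WSPD by the Callahan--Kosaraju variant in which each pair has a singleton side, which has $O(n\log n)$ pairs, and reruns Algorithm~\ref{alg-improveB}; since one side of every pair is a single point, Lemma~\ref{lem-improvB} is applied only once per pair, giving $3+\epsilon$ directly. That device is precisely what controls the number of per-point ``direct'' edges, which is the quantity your Part~(i) fails to bound.
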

\begin{proof} 
Consider the following variant of the WSPD. For every pair $\{X,Y\}$ 
in the standard WSPD, where $|X| \leq |Y|$, we replace this pair 
by the $|X|$ pairs $\{\{x\} ,B \}$, where $x$ ranges over all points 
of $X$. Thus, in this new WSPD, each pair contains at least one 
singleton set. Callahan and Kosaraju~\cite{callahan95} showed that this 
new WSPD consists of $O(n \log n)$ pairs.

We run Algorithm~\ref{alg-improveB} on the set $S$, using this new 
WSPD. Let $G$ be the graph that is computed by this algorithm. 
Observe that Lemma~\ref{lem-improvB} still holds for $G$. 
In the proof of Lemma~\ref{paz7} of the upper bound on the stretch 
factor of $G$, we have to apply Lemma~\ref{lem-improvB} only once. 
Therefore, the stretch factor of $G$ is at most $3+\epsilon$. 
\end{proof}

\section{Conclusion}\label{section-bi-span-conclusion}
We have shown that for every complete $k$-partite geometric graph $K$  
in $\mathbb{R}^d$ with $n$ vertices and for every constant 
$\epsilon > 0$, 
\begin{enumerate} 
\item a $(5+\epsilon)$-spanner of $K$ having $O(n)$ edges can be 
      computed in $O(n \log n)$ time, 
\item a $(3+\epsilon)$-spanner of $K$ having $O(n \log n)$ edges can be 
      computed in $O(n \log n)$ time. 
\end{enumerate} 
The latter result is optimal for 
$2 \leq k \leq n - \Theta(\sqrt{n \log n} )$, because a spanner of 
$K$ having stretch factor less than $3$ and having $O(n \log n)$ edges 
does not exist for all complete $k$-partite geometric graphs.  

We leave open the problem of determining the best stretch factor that 
can be obtained by using $O(n)$ edges. 
  
Future work may include verifying other properties that are known for 
the general geometric spanner problem. For example, is there a spanner 
of a complete $k$-partite geometric graph that has bounded degree? Is 
there a spanner of a complete $k$-partite geometric graph that is 
planar? From a more general point of view, it seems that little is
known about geometric spanners of graphs other than the complete
graph. The unit disk graph received great attention, but there are a
large family of other graphs that also deserve attention.

\bibliographystyle{plain}
\bibliography{k-partite-spanners}

\begin{thebibliography}{1}

\bibitem{addjs-sswg-93}
I.~Alth{\"o}fer, G.~Das, D.~P. Dobkin, D.~Joseph, and J.~Soares.
\newblock On sparse spanners of weighted graphs.
\newblock {\em Discrete \& Computational Geometry}, 9:81--100, 1993.

\bibitem{couture07chromaticReport}
Prosenjit Bose, Paz Carmi, Mathieu Couture, Anil Maheshwari, Michiel Smid, and
  Norbert Zeh.
\newblock Geometric spanners with small chromatic number.
\newblock In {\em Proceedings of the 5th Workshop on Approximation and Online
  Algorithms}, Lecture Notes in Computer Science, Berlin, 2007.
  Springer-Verlag.

\bibitem{ck-fasgg-93}
P.~B. Callahan and S.~R. Kosaraju.
\newblock Faster algorithms for some geometric graph problems in higher
  dimensions.
\newblock In {\em Proceedings of the 4th ACM-SIAM Symposium on Discrete
  Algorithms}, pages 291--300, 1993.

\bibitem{callahan95}
Paul~B. Callahan and S.~Rao Kosaraju.
\newblock A decomposition of multidimensional point sets with applications to
  $k$-nearest-neighbors and $n$-body potential fields.
\newblock {\em J. ACM}, 42(1):67--90, 1995.

\bibitem{gs-osogg-06}
J.~Gudmundsson and M.~Smid.
\newblock On spanners of geometric graphs.
\newblock In {\em Proceedings of the 10th Scandinavian Workshop on Algorithm
  Theory}, volume 4059 of {\em Lecture Notes in Computer Science}, pages
  388--399, Berlin, 2006. Springer-Verlag.

\bibitem{smid07}
Giri Narasimhan and Michiel Smid.
\newblock {\em Geometric Spanner Networks}.
\newblock Cambridge University Press, New York, NY, USA, 2007.

\bibitem{raman05}
Bhaskaran Raman and Kameswari Chebrolu.
\newblock Design and evaluation of a new {MAC} protocol for long-distance
  802.11 mesh networks.
\newblock In {\em MobiCom '05: Proceedings of the 11th annual international
  conference on Mobile computing and networking}, pages 156--169, New York, NY,
  USA, 2005. ACM Press.

\bibitem{s-cmsg-91}
J.~S. Salowe.
\newblock Constructing multidimensional spanner graphs.
\newblock {\em International Journal of Computational Geometry \&
  Applications}, 1:99--107, 1991.

\bibitem{v-sgagc-91}
P.~M. Vaidya.
\newblock A sparse graph almost as good as the complete graph on points in
  {$K$} dimensions.
\newblock {\em Discrete \& Computational Geometry}, 6:369--381, 1991.

\end{thebibliography}

\end{document}